\documentclass[11pt,letterpaper]{article}
\usepackage[utf8]{inputenc}
\usepackage{calc}
\usepackage{amsmath,amssymb,amsthm}
\usepackage{anysize}
\marginsize{1in}{1in}{1.6cm}{2cm}
\usepackage{tikz}
\usetikzlibrary{fadings}
\usetikzlibrary{decorations.pathreplacing}
\usepackage{enumerate}
\usepackage[colorinlistoftodos]{todonotes}
\usepackage{hyperref}

\newcommand{\importfading}[2]{%
#1
}

\def\eg{{e.\,g.}}
\def\ie{{i.\,e.}}
\def\BR{\mathbb{R}}
\def\eps{\varepsilon}

\def\dmax{D}
\def\dmin{d_{\min}}
\def\fmax{b}
\def\fmin{a}
\def\yopt{y_{\text{opt}}}
\def\th{\text{th}}
\def\RLP{\textsc{Ring Loading Problem}}
\def\Partition{\textsc{Partition}}
\def\NP{\text{NP}}

\newtheorem{theorem}{Theorem}
\newtheorem{lemma}{Lemma}
\newtheorem{observation}{Observation}

\newtheorem{definition}{Definition}
\newtheorem{conjecture}{Conjecture}

\title{A note on the ring loading problem\thanks{Large parts of the research described in this paper were conducted while the author was visiting the University of British Columbia (UBC) in Vancouver, Canada. It was partially supported by the DFG Research Center \textsc{Matheon} ``Mathematics for key technologies'' in Berlin, by DFG Priority Programme 1307 ``Algorithm Engineering'' (grant SK~58/6-3), by the Alexander von Humboldt-Foundation, and by the Peter Wall Institute for Advanced Studies at UBC.}}
\author{Martin Skutella\footnote{TU Berlin, Institute of Mathematics, MA 5-2, Stra{\ss}e des 17.~Juni 136, 10623 Berlin, Germany, \href{mailto:martin.skutella@tu-berlin.de}{\texttt{martin.skutella@tu-berlin.de}}}}
\date{}

\colorlet{myred}{red!100!black}
\colorlet{myredlabel}{myred!100!black}
\colorlet{myblue}{blue!100!black}
\colorlet{mybluelabel}{myblue!100!black}
\colorlet{mygreen}{green!80!black}
\colorlet{mygreenlabel}{mygreen!80!black}

\tikzstyle{point}=[circle, fill, inner sep = 0pt, minimum size = 0.4em]

\tikzstyle{node}=[circle, fill, inner sep = 0pt, minimum size = 0.9em]
\tikzstyle{bignode}=[circle, fill, inner sep = 0pt, minimum size = 1.3em,text=white]

\newcommand{\drawpattern}[6]{

\pgfmathparse{0}\let\m\pgfmathresult
\pgfmathparse{#2}\let\y\pgfmathresult
\pgfmathparse{#2}\let\ma\pgfmathresult
\pgfmathparse{#2}\let\mi\pgfmathresult

\foreach \z in {#1} {
\pgfmathparse{int(\m+1)}\global\let\m\pgfmathresult
\pgfmathparse{\y+\z}\global\let\y\pgfmathresult
\pgfmathparse{max(\y,\ma)}\global\let\ma\pgfmathresult
\pgfmathparse{min(\y,\mi)}\global\let\mi\pgfmathresult
}

\fill[gray!40!white] (0,\mi) rectangle (\m,\ma); 
\draw[very thick,->] (-0.2,0) -- (\m+0.3,0) node [right] {$k$}; 
\foreach \i in {0,...,\m} \draw [thick] (\i,-0.1) node[label=below:$\i$] {} -- +(0,0.2); 
\draw[very thick,->] (0,0) -- (0,\ma+0.3) node [above] {$p_z(k)$}; 
\draw[very thick] (0,#2) node[point,label=left:$#3$] {} \foreach \z in {#1} {-- ++(1,\z) node[point] {}}; 
\draw[thick,dashed] (0,\y) node[label=left:$#4$] {} -- +(\m,0); 
\draw[thick] (-0.1,\mi) node[label=left:$#5$] {} -- +(0.2,0);
\draw[thick] (-0.1,\ma) node[label=left:$#6$] {} -- +(0.2,0);
}

\def\radius{3.51}
\pgfmathparse{\radius+0.45}\let\blacklabelradius\pgfmathresult
\pgfmathparse{\radius+0.35}\let\graylabelradius\pgfmathresult
\pgfmathparse{\radius+0.07}\let\grayarcradius\pgfmathresult
\pgfmathparse{\radius-0.3}\let\flowradius\pgfmathresult
\def\grayangle{12}

\newcommand{\drawring}[2]{
\begin{scope}[xscale=-1]
\pgfmathparse{0}\let\m\pgfmathresult
\pgfmathparse{0}\let\flow\pgfmathresult
\pgfmathparse{0}\let\sumdemands\pgfmathresult
\foreach \u/\v in {#1} {
\pgfmathparse{int(\m+1)}\global\let\m\pgfmathresult
\pgfmathparse{int(\flow+\u)}\global\let\flow\pgfmathresult
\pgfmathparse{int(\sumdemands+\u+\v)}\global\let\sumdemands\pgfmathresult
}
\pgfmathparse{180.0/\m}\let\delta\pgfmathresult
\pgfmathparse{40.0/\m}\global\let\graylabelangle\pgfmathresult

\draw [ultra thick] (0,0) circle (\radius cm);

\pgfmathparse{-1}\let\i\pgfmathresult
\foreach \u/\v in {#1}{
\pgfmathparse{int(\i+1)}\global\let\i\pgfmathresult
\pgfmathparse{(\i+0.5)*\delta}\let\angle\pgfmathresult

\draw [very thick,myred,->] (\angle:\grayarcradius cm) arc (\angle:\angle+\grayangle:3.07cm);
\node [myredlabel] at (\angle+\graylabelangle:\graylabelradius cm) {$\v$};
\draw [very thick,mygreen,->] (\angle:\grayarcradius cm) arc (\angle:\angle-\grayangle:3.07cm);
\node [mygreenlabel] at (\angle-\graylabelangle:\graylabelradius cm) {$\u$};
\pgfmathparse{int(\i+1)}\let\im\pgfmathresult
\ifthenelse{\equal{#2}{}}{
\node [node] (node\i) at (\angle:\radius) {};
}{
\node [bignode] (node\i) at (\angle:\radius) {\footnotesize$\im$};
}
\pgfmathparse{int(\i+\m+1)}\let\im\pgfmathresult
\ifthenelse{\equal{#2}{}}{
\node [node] (opposite_node\i) at (180+\angle:\radius) {};
}{
\node [bignode] (opposite_node\i) at (180+\angle:\radius) {\footnotesize$\im$};
}
\draw [thick,gray,dash pattern=on 4pt off 3.92pt] (node\i) -- (opposite_node\i);


\pgfmathparse{int(\u+\v)}\global\let\y\pgfmathresult
\node at (180+\angle:\blacklabelradius cm) {$\y$};

\pgfmathparse{int(\flow-\u+\v)}\global\let\flow\pgfmathresult
\node [mybluelabel] at (\angle+0.5*\delta:\flowradius cm) {$\flow$};
\pgfmathparse{int(\sumdemands-\flow)}\let\g\pgfmathresult
\node [mybluelabel] at (180+\angle+0.5*\delta:\flowradius cm) {$\g$};
}
\end{scope}
}

\begin{document}
\maketitle

\begin{abstract}
The \RLP{} is an optimal routing problem arising in the planning of optical communication networks which use bidirectional SONET rings. In mathematical terms, it is an unsplittable multicommodity flow problem on undirected ring networks. We prove that any split routing solution to the \RLP{} can be turned into an unsplittable solution while increasing the load on any edge of the ring by no more than~$+\frac{19}{14}\dmax$, where~$\dmax$ is the maximum demand value. This improves upon a classical result of Schrijver, Seymour, and Winkler (1998) who obtained a slightly larger bound of~$+\frac32\dmax$. We also present an improved lower bound of~$\frac{11}{10}\dmax$ (previously~$\frac{101}{100}\dmax$) on the best possible bound and disprove a famous long-standing conjecture of Schrijver et al.\ in this context.
\end{abstract}

\section{Introduction}

An instance of the \RLP{} is given by an undirected ring (cycle) on node set~$V=\{1,2,\dots,n\}$ (numbered clockwise along the ring) with demands~$d_{i,j}\geq0$ for each pair of nodes~$i<j$ in~$V$. The task is to route all demands unsplittably, that is, each demand~$d_{i,j}$ needs to be routed from node~$i$ to node~$j$ either in clockwise direction on the path~$i,i+1,\dots,j$ or in counterclockwise direction on the path~$i,i-1,\dots,j$. The objective is to minimize the maximum load on an edge of the ring. The optimum solution value is denoted by~$L$. A small example instance of the \RLP{} is given in Figure~\ref{fig:tiny-example}.
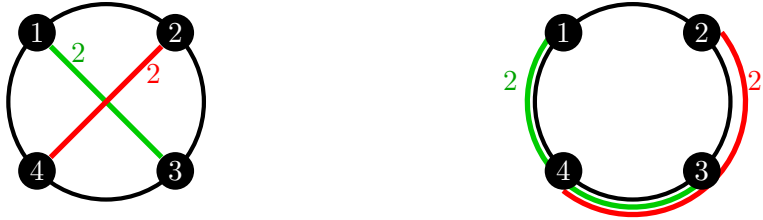
\begin{figure}[t]
\centering
\begin{tikzpicture}
\def\radius{1.3}
\begin{scope}[xscale=-1]
\draw [ultra thick] (0,0) circle (\radius cm);	
\foreach \i in {1,...,4} \node [bignode] (node\i) at (-45+90*\i:\radius cm) {$\i$};
\draw [line width=2pt,mygreen] (node1) -- node[near start,above,mygreenlabel] {$2$} (node3);
\draw [line width=2pt,myred] (node2) -- node[near start,right,myredlabel] {$2$} (node4);
\end{scope}
\begin{scope}[xshift=7cm,xscale=-1]
\draw [line width=2pt,mygreen] (-7+45:1.4cm) arc (-7+45:7-135:1.4cm);
\node [mygreenlabel] at (10:1.65cm) {$2$};
\draw [line width=2pt,myred] (7+135:1.5cm) arc (7+135:-7+315:1.5cm);
\node [myredlabel] at (170:1.65cm) {$2$};
\draw [ultra thick] (0,0) circle (\radius cm);	
\foreach \i in {1,...,4} \node [bignode] (node\i) at (-45+90*\i:\radius cm) {$\i$};
\end{scope}
\end{tikzpicture}
\caption{Left: An instance of the \RLP{} on a ring of $n=4$ nodes and two nonzero demands~$d_{1,3}=d_{2,4}=2$. Right: An unsplittable routing where the first demand is routed counterclockwise from node~$1$ to node~$3$, and the second demand is routed clockwise. The maximum edge load is~$4$ and this is obviously optimal, \ie, $L=4$.}
\label{fig:tiny-example}
\end{figure}

\paragraph{Application in telecommunication networks.}
This optimization problem arose in the early 1990ies as a crucial subproblem in the design of survivable telecommunication networks utilizing fiber-optic-based technologies; it was first studied by Cosares and Saniee~\cite{CosaresSaniee1994} who also introduced the name \emph{\RLP}. 

Fiber-optic-based telecommunication networks are built according to a standard called SONET (Synchronous Optical Networks) or the later derived SDH (Synchronous Digital Hierarchy) standard. These networks are based on rings of nodes (switches) with optical links between each pair of neighboring nodes. The actual bandwidth available along any edge of a ring is essentially determined by the capacity of the add/drop multiplexers (ADMs) installed at its nodes which make up a large part of the hardware cost. An important goal in network planning is thus to minimize the bandwidth required to satisfy all demands. In the early 1990ies Bell Communications Research (Bellcore) developed a network planning software called the `SONET Toolkit' which also initiated the scientific study of the \RLP{}. For a more detailed account of the technical background and history, we refer to the articles~\cite{CosaresSaniee1994,SchrijverSeymourWinkler-SIAMReview,Khanna-SONET97,Shepherd-survey2009}.

\paragraph{Known results.}
Cosares and Saniee~\cite{CosaresSaniee1994} prove by a reduction from the problem \Partition{} that the \RLP{} is weakly \NP-hard. It is not known whether the problem is even strongly \NP-hard or can be solved in pseudo-polynomial time. If all non-zero demands are equal, the problem can be solved in polynomial time; this follows from the work of Frank~\cite{Frank1985} (see also~\cite{FrankNishizekiSaitoSuzukiTardos1992}) and is based on a theorem of Okamura and Seymour~\cite{OkamuraSeymour81}. 

The hardness of the general problem motivates the consideration of the relaxed version of the \RLP{} where demands may be split, \ie, a demand can be sent partly clockwise, partly counterclockwise. The optimum ring load~$L^*$ of a split routing is obviously a lower bound on the optimum load~$L$ of an unsplittable routing. Cosares and Saniee~\cite{CosaresSaniee1994} prove that~$L\leq2L^*$, and this inequality is tight; see, \eg, the instance depicted in Figure~\ref{fig:tiny-example} where~$L=4$ and~$L^*=2$. Moreover, they present an efficient algorithm for obtaining an unsplittable routing of load at most~$2L^*$. For the problem of finding an optimum split routing, Myung, Kim, and Tcha~\cite{MyungKimTcha1997} give a clever combinatorial algorithm with running time~$O(nk)$, where~$k$ is the number of nonzero demands. 

Schrijver, Seymour, and Winkler, in a landmark paper~\cite{SchrijverSeymourWinkler-SIDMA} (see also the annotated reprint~\cite{SchrijverSeymourWinkler-SIAMReview}), present a clever analysis for a simple greedy algorithm that turns any split routing into an unsplittable routing while increasing the load on any edge by at most~$\frac32\dmax$, where~$\dmax:=\max_{i,j}d_{i,j}$ is the maximum demand value. Their result thus implies~$L\leq L^*+\frac32\dmax$. Khanna~\cite{Khanna-SONET97} shows how to use this bound in order to obtain a polynomial-time approximation scheme for the \RLP{}. For the special case of uniform nonzero demands~$d_{i,j}=\dmax$, Schrijver et al.\ even prove~$L\leq L^*+\dmax$. 

On the other hand, Schrijver et al.\ exhibit an instance of the \RLP{} together with a carefully chosen split routing 
that cannot be turned into an unsplittable routing without increasing the load on some edge by at least~$\frac{101}{100}\dmax$. This observation, however, does \emph{not} imply a gap strictly larger than~$\dmax$ between the optimum values of split and unsplittable routings. In the conclusion of their paper, Schrijver, Seymour, and Winkler write:
\begin{quote}
``\dots even though the mathematics refuses to cooperate, \emph{we} guarantee~$L\leq L^*+D$.''	
\end{quote}
In his excellent survey~\cite{Shepherd-survey2009}, Shepherd restates this `guarantee' as Conjecture~1 and indicates its `tempting similarity' to a result of Dinitz, Garg, and Goemans~\cite{DGG99} on single-source unsplittable flows and a related conjecture of Goemans~\cite{BellairsOpenProblems} (see also~\cite{Skut-MathProg2002,MartensSalazarSkut06}).

\paragraph{Our contribution.} 
Our main result is the following theorem.

\begin{theorem}\label{thm:7/5}
Any split routing solution to the \RLP{} can be turned into an unsplittable routing while increasing the load on any edge by no more than~$\frac{19}{14}\dmax$. 
\end{theorem}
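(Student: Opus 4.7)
The plan is to strengthen the analysis of Schrijver, Seymour, and Winkler's (SSW) greedy rounding algorithm, perhaps with a small modification to the rounding rule. To begin, I would reduce to a canonical situation: subtract from each edge the load contributed by demands that are already entirely unsplit in the given split routing, so that only strictly fractional demands remain, and rescale so that $\dmax=1$. Denote by $L^*$ the maximum load of the given split routing and by $x_{ij}\in(0,d_{ij})$ the clockwise fraction of each remaining demand.

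Next, I would set up the rounding process in the SSW framework. The split demands can be viewed as a cyclic sequence of chords of the ring, and rounding each one amounts to choosing, per chord, whether to commit its flow entirely clockwise or entirely counterclockwise. After processing the chords in an appropriate order, the excess load on any edge $e$ equals an alternating sum of the fractional contributions of the chords crossing~$e$. SSW bound this alternating sum by isolating a single ``bad'' contribution of at most $\dmax$ and controlling the remainder by $\tfrac{1}{2}\dmax$, giving their $\tfrac{3}{2}\dmax$ bound. My improvement would come from classifying the chords crossing a given overloaded edge~$e$ according to a threshold $\tau\dmax$: chords whose fractional contribution on~$e$ exceeds $\tau\dmax$ are ``large'' and limited in number (since each demand is at most $\dmax$, so the large ones are rare once $\tau$ is not too small), while the total contribution of ``small'' chords to the alternating sum is bounded by $\tau\dmax$. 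Combining these two bounds and optimizing the threshold~$\tau$ should yield $\tfrac{19}{14}\dmax$; the value $\tfrac{19}{14}=\tfrac{3}{2}-\tfrac{1}{7}$ suggests that the natural threshold is around $\tau=\tfrac{5}{7}$ or some similar simple ratio coming from balancing ``one large chord'' vs.\ ``accumulated small chords'' contributions.

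I would then argue that the rounding rule can be adapted to realize this improved bound: at each step, instead of the pure SSW choice, the algorithm would look ahead to the large fractional parts of upcoming chords and commit the current demand in the direction that prevents a particular adversarial alignment of large contributions on the bottleneck edge. The invariant to carry through the induction is that on every edge, the excess above the split load does not exceed $\tfrac{19}{14}\dmax$, with a careful case distinction for the edge being ``freshly loaded'' by a large contribution versus ``slowly loaded'' by many small ones.

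The principal obstacle will be the case analysis at the boundary, where large and small contributions are interleaved and the worst-case extremal configuration must be identified exactly, presumably by showing that any configuration approaching the SSW bound of $\tfrac{3}{2}\dmax$ forces the existence of an additional chord whose fractional value is small enough to be absorbed — thereby sharpening the bound to $\tfrac{19}{14}\dmax$. Calibrating this against the new lower bound of $\tfrac{11}{10}\dmax$ mentioned in the abstract (and understanding why the upper and lower bounds do not match) is likely to be the most subtle part of the argument.
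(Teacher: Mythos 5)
Your proposal runs a genuinely different route from the paper, and the route as described has a critical gap at its core. The paper does not try to make a single modified greedy pass smarter via look-ahead. Instead, after reducing (as you do) to a ring of pairwise crossing demands and representing any unsplittable choice as a ``pattern'' $p_z(k)=x+\sum_{i\leq k}z_i$ living on a vertical strip, it splits into two regimes by a threshold $\delta=\frac27$: if some demand is \emph{medium} ($d_i\in[\delta\dmax,(1-\delta)\dmax]$), a simple targeted-endpoint argument (Lemma~\ref{lem:medium-demands}) gives performance $\bigl(\frac32-\frac\delta2\bigr)\dmax$; otherwise, when every demand is small or big, the paper constructs \emph{three different greedy patterns} with carefully chosen start/end points and proves (Lemmas~\ref{lem:help} and~\ref{lem:cyclic-permutation}) that either one of them already has performance $\leq\frac{19}{14}\dmax$, or two of them must come within $\frac12\delta\dmax$ of each other at some index $k$; in that case the crossover operation of Lemma~\ref{lem:crossover} splices the prefix of one with the suffix of the other to produce a new pattern whose start--end sum is controlled, giving the bound via Observation~\ref{obs:performance}. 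The whole mechanism of combining \emph{two} greedy solutions is what makes the analysis go through.

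The gap in your proposal is precisely that you never introduce a second greedy solution or any crossover-type combination; you assert that a look-ahead rule plus an inductive invariant ``excess $\leq\frac{19}{14}\dmax$ on every edge'' can be maintained, but this is the hard part and is left unsupported. The greedy is local and does not know which edge will be the bottleneck, and the known lower bound constructions (including the $\frac{11}{10}\dmax$ instance in this paper) show that a single greedy pass is already forced into a $>\dmax$ increase regardless of tie-breaking; absent a mechanism like the paper's three-pattern/crossover argument there is no reason a single pass should beat $\frac32\dmax$ in the all-small-or-big-demand regime. You did correctly guess the threshold neighborhood ($\delta\approx\frac27$, or $1-\delta\approx\frac57$) from the target constant, and your medium-vs.-extreme classification loosely parallels the paper's case split, but the substance of Section~\ref{sec:small-big} --- closeness of patterns, the cyclic-permutation lemma, and the crossover --- is missing, and without it the argument does not close.
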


In particular, this result implies~$L\leq L^*+\frac{19}{14}\dmax$. Our algorithm runs in linear time and combines pairs of solutions obtained by the greedy algorithm of Schrijver, Seymour, and Winkler~\cite{SchrijverSeymourWinkler-SIDMA} using a clean crossover operation. 
In the area of genetic algorithms, it is known that the additional use of crossover (also called `recombination' or `sexual reproduction') is empirically more effective than mutation only, and this can sometimes even be confirmed by theoretical evidence; see, \eg,~\cite{JansenWegener2002,Theile-diss}. We are, however, not aware of previous approximation results relying on such crossover operations.

We also exhibit a relatively simple instance of the \RLP{} together with particular split routings that cannot be turned into unsplittable routings without increasing the load on some edge by at least~$\frac{11}{10}\dmax$
. After more than 15 years, our results are the first improvements on the classical results of Schrijver, Seymour, and Winkler~\cite{SchrijverSeymourWinkler-SIDMA} (upper bound $\frac32\dmax$ and lower bound $\frac{101}{100}\dmax$). Last but not least, we present an instance with~$L=L^*+\frac{11}{10}\dmax$
, thus disproving Schrijver et al.'s long-standing conjecture~$L\leq L^*+\dmax$.

We finally mention an interesting combinatorial implication of our results. Schrijver et al.~\cite{SchrijverSeymourWinkler-SIDMA} define~$\beta$ to be the infimum of all~$\alpha\in\BR$ such that the following combinatorial statement holds: for all positive integers~$m$ and nonnegative reals~$u_1,\dots,u_m$ and~$v_1,\dots,v_m$ with~$u_i+v_i\leq1$, there exist~$z_1,\dots,z_m$ such that for every~$k$, $z_k\in\{v_k,-u_k\}$ and
\begin{align*}
\left\lvert\sum_{i=1}^kz_i-\sum_{i=k+1}^mz_i\right\rvert\leq\alpha\enspace.
\end{align*} 
Schrijver et al.\ prove that~$\beta\in[\frac{101}{100},\frac32]$. As a consequence of our results, we narrow this interval down to~$\beta\in[\frac{11}{10},\frac{19}{14}]$.

\paragraph{Outline.}
In Section~\ref{sec:preliminaries} we present preliminary observations and results. In Section~\ref{sec:medium} we show how to improve the bound of~$\frac32\dmax$ in case there happens to be a demand of `medium size'. Subsequently, in Section~\ref{sec:small-big}, we deal with the remaining instances and complete the proof of Theorem~\ref{thm:7/5}. Finally, in Section~\ref{sec:lower-bound} we present the improved lower bound as well as the counterexample to Schrijver et al.'s conjecture.


\section{Preliminaries}
\label{sec:preliminaries}

In this section we present preliminary observations which can already be found --~explicitly or implicitly~-- in the work of Schrijver et al.~\cite{SchrijverSeymourWinkler-SIDMA}.
Two demands~$d_{i,j}$ and~$d_{k,\ell}$ are said to be \emph{parallel} if there is an $i$-$j$-path and a $k$-$\ell$-path that are edge-disjoint; otherwise they are said to \emph{cross}.

\pagebreak
\begin{observation}
\label{obs:parallel}
For an arbitrary split routing of two parallel demands, flow can be rerouted such that one of the two demands is routed unsplittably without increasing the load on any edge. 
\end{observation}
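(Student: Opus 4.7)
The strategy is to exploit the defining property of parallel demands---the existence of two edge-disjoint paths---and perform a simultaneous ``inward'' push on both demands until one of them is routed entirely on its chosen path.

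First, the setup: let $d_1:=d_{i,j}$ and $d_2:=d_{k,\ell}$, and fix edge-disjoint paths $P_1$ for demand~$1$ and $P_2$ for demand~$2$ as guaranteed by parallelism. Call $P_i$ the \emph{inner arc} of demand~$i$ and the complementary arc the \emph{outer arc}. Let $f_i\in[0,d_i]$ denote the amount of demand~$i$ routed on its inner arc in the given split routing (so $d_i-f_i$ is routed on the outer arc). Since $P_1$ and $P_2$ are edge-disjoint, every edge of the ring falls into exactly one of three classes, with the following combined load contribution from demands~$1$ and~$2$: (i)~edges on $P_1$ only, load $f_1+(d_2-f_2)$; (ii)~edges on $P_2$ only, load $(d_1-f_1)+f_2$; (iii)~edges on neither, load $(d_1-f_1)+(d_2-f_2)$.

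The key move is to set $\Delta:=\min\{d_1-f_1,\,d_2-f_2\}\geq 0$ and simultaneously replace $(f_1,f_2)$ by $(f_1+\Delta,\,f_2+\Delta)$, i.e., push $\Delta$ units of each demand from its outer to its inner arc. By choice of $\Delta$, both new values stay in $[0,d_i]$ and at least one of them attains its upper bound, so after the shift either demand~$1$ or demand~$2$ is routed entirely on its inner arc, hence unsplittably. Inspecting the three edge classes shows that type~(i) and type~(ii) loads are unchanged (the $+\Delta$ from one demand's inner shift exactly cancels the $-\Delta$ from the other demand's outer shift), while type~(iii) loads each decrease by $2\Delta$. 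All other demands are untouched, so no edge load in the overall routing increases.

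No real obstacle is expected here. The one subtle point worth flagging is that the correct move is the same-sign simultaneous shift \emph{into} the inner arcs; an opposite-sign ``compensating'' shift would fail, precisely because parallel demands share no inner arc on which such a compensation could be absorbed.
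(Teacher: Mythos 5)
Your proof is correct and follows essentially the same route as the paper: both fix edge-disjoint paths for the two demands, push $\min\{d_1-f_1,\,d_2-f_2\}$ units of each demand onto its edge-disjoint path while pulling the same amount off the complementary path, and observe that one demand becomes unsplit while no edge load increases. The paper argues the load conservation via the containments ($P^b_1\supseteq P^a_2$ and $P^b_2\supseteq P^a_1$), whereas you spell it out by the three-way edge classification, but these are the same calculation.
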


\begin{proof}
For~$q=1,2$, let~$P^a_{q},P^b_{q}$ be the two paths of the $q^{\th}$ demand, and let~$x^a_q$ and~$x^b_q$ be the flow routed along these paths, respectively. Since the two demands are parallel, we may assume that~$P^a_1$ and~$P^a_2$ are edge-disjoint. Notice that~$P^b_1$ thus contains~$P^a_2$, and~$P^b_2$ contains~$P^a_1$. Increasing flow on~$P^a_1$ and~$P^a_2$ by~$\min\{x^b_1,x^b_2\}$ and decreasing flow on~$P^b_1$ and~$P^b_2$ by the same amount therefore yields the desired routing.
\end{proof}

In order to prove Theorem~\ref{thm:7/5}, demands that are unsplittably routed in the given solution can be ignored (deleted). Thus, as a consequence of Observation~\ref{obs:parallel}, it suffices to prove the theorem for instances without parallel demands. In particular, any node is endpoint of at most one demand. Moreover, if some node of the ring is not an endpoint of any demand, its two incident edges always have the same load such that they can be merged and their common node be deleted. 

After this preprocessing, we are left with a ring of even length~$n=2m$ and~$m$ demands~$d_i:=d_{i,i+m}>0$, $i=1,2,\dots,m$. Moreover, a split routing is given where, for each $i=1,\dots,m$, exactly~$u_i>0$ units of flow are sent from~$i$ to~$i+m$ in clockwise direction and the remaining $v_i:=d_i-u_i>0$ units are sent counterclockwise; an example is given in Figure~\ref{fig:first-example}.
\begin{figure}[t]
\centering
\def\radius{3.53}
\begin{tikzpicture}
\drawring{3/3, 4/6, 4/4, 6/4, 3/3, 6/4, 3/1, 6/4}{1}
\end{tikzpicture}
\caption{An instance of the \RLP{} on $n=16$ nodes with $m=8$ nonzero demands for opposite node pairs and a split routing. There is, for example, a demand of value~$d_7=d_{7,15}=4$ of which~$u_7=1$ unit is routed clockwise and~$v_7=3$ units are routed counterclockwise from node~$7$ to node~$15$. The maximum demand value is~$\dmax=10$ and the maximum edge load is~$37$ (on edges~$\{2,3\}$ and~$\{3,4\}$).}
\label{fig:first-example}
\end{figure}
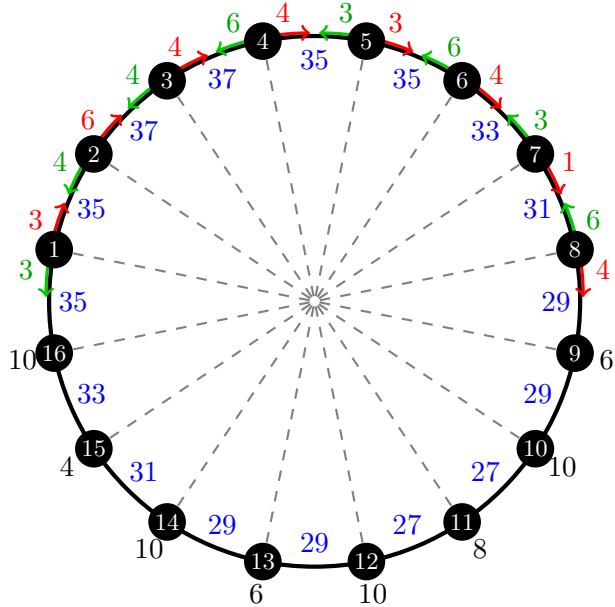

An unsplittable solution to the \RLP{} can be represented by an~$m$-tuple $z=(z_1,z_2,\dots,z_m)$ with $z_i\in\{v_i,-u_i\}$ for~$i=1,\dots,m$. Here, $z_i=v_i$ means that demand~$i$ is routed clockwise, \ie, compared to the given split routing, the flow on this path is increased by~$v_i$ while the flow on the counterclockwise path is decreased by~$v_i$. Correspondingly, $z_i=-u_i$ means that demand~$i$ is routed counterclockwise such that the flow on the clockwise path is decreased by~$u_i$ while the flow on the counterclockwise path is increased by~$u_i$. As a consequence, for~$k=1,\dots,m$, the load on edge~$\{k,k+1\}$ changes by 
\begin{align*}
\sum_{i=1}^kz_i-\sum_{i=k+1}^mz_i\enspace,
\end{align*}
and the load on the opposite edge~$\{k+m,1+(k+m\mod 2m)\}$ changes by the negative of this amount. Thus, compared to the given split routing, the maximum increase of flow on any edge is
\begin{align*}
\max_{k=1,\dots,m}\,\left\lvert\,\sum_{i=1}^kz_i-\sum_{i=k+1}^mz_i\,\right\rvert\enspace.
\end{align*}
We refer to this value as the \emph{additive performance} of solution~$z$.

For the sake of a more intuitive understanding, we use the following alternative representation of an unsplittable solution~$z$: For an arbitrary but fixed~$x\in\BR$, let 
\begin{align*}
p_z(k):=x+\sum_{i=1}^kz_i\qquad \text{for $k=0,1,\dots,m$.}
\end{align*}
An intuitive graphical representation of~$p_z$ is the graph of the continuous function on~$[0,m]$ obtained from~$p_z$ by linear interpolation; see Figure~\ref{fig:pattern} for an illustrating example.
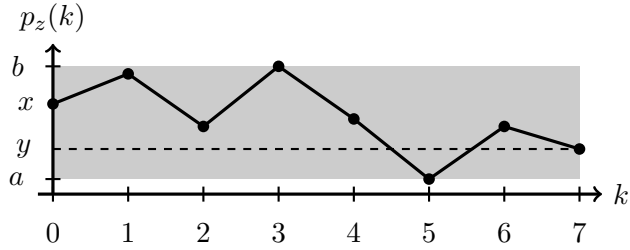
\begin{figure}[t]
\centering
\newcommand{\firstexample}{0.4,-0.7,0.8,-0.7,-0.8,0.7,-0.3}
\begin{tikzpicture}
\drawpattern{0.4,-0.7,0.8,-0.7,-0.8,0.7,-0.3}{1.2}{x}{y}{a}{b}
\end{tikzpicture}
\caption{Pattern~$p_z$ living on strip~$[a,b]$ with start point~$x$, end point~$y$.
}
\label{fig:pattern}
\end{figure}
Notice that
\begin{align*}
p_z(k)-p_z(k-1)=z_k\in\{v_k,-u_k\}\qquad\text{for~$k=1,\dots,m$.}	
\end{align*}
We refer to~$p_z$ as a \emph{pattern starting at $x=p_z(0)$ and ending at $y=p_z(m)$}. Let~$a:=\min_{k=0,\dots,m}p_z(k)$ and~$b:=\max_{k=0,\dots,m}p_z(k)$. In reference to Figure~\ref{fig:pattern} we refer to the interval~$[a,b]$ as a \emph{strip} and say that the \emph{pattern lives on strip~$[a,b]$ of width~$b-a$}. 

In the remainder of this paper we use the terms `pattern' and `unsplittable solution' interchangeably. Notice that an unsplittable solution corresponds to several patterns as the start point~$x$ can be arbitrarily chosen. All these patterns are, however, identical up to vertical shifting. To make the correspondence between unsplittable solutions~$z$ and patterns~$p_z$ unique, one could, for example, require~$x=0$ or~$a=0$. As these requirements would cause unnecessary technicalities in the presentation, we rather live with the ambiguity which doesn't cause any further problems.

\pagebreak
\begin{observation}\label{obs:performance}
Let~$z$ be an unsplittable solution and consider a corresponding pattern~$p_z$ living on strip~$[a,b]$ with start point~$x$ and end point~$y$.
\begin{enumerate}[(i)]
\item\label{obs:performance:i} The unsplittable solution~$z$ and its pattern~$p_z$ have additive performance
\begin{align}\label{eq:pattern-performance}
\max_{k=1,\dots,m}\left\lvert\sum_{i=1}^kz_i-\sum_{i=k+1}^mz_i\right\rvert = \max\bigl\{2\fmax-x-y,x+y-2\fmin\bigr\}\enspace.
\end{align}
\item\label{obs:performance:iii} The additive performance of pattern~$p_z$ is at least the width~$b-a$ of strip~$[\fmin,\fmax]$. This performance is achieved if and only if~$x+y=a+b$, \ie, start point~$x$ and end point~$y$ are symmetric with respect to the center of interval~$[\fmin,\fmax]$.
\item\label{obs:performance:iv} The additive performance of pattern~$p_z$ is at most $\fmax-\fmin+\eps$ if and only if the pattern starts at~$x$ and ends at~$y\in[\yopt-\eps,\yopt+\eps]\cap[\fmin,\fmax]$ with~$\yopt:=a+b-x$.
\end{enumerate}
\end{observation}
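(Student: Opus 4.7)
The plan is to rewrite the additive performance as a simple expression in $p_z(k)$ and then read off all three statements from it. First, using that $\sum_{i=1}^m z_i=p_z(m)-p_z(0)=y-x$, I would compute for every $k\in\{0,1,\dots,m\}$
\[
\sum_{i=1}^k z_i-\sum_{i=k+1}^m z_i \;=\; 2\sum_{i=1}^k z_i-\sum_{i=1}^m z_i \;=\; 2\bigl(p_z(k)-x\bigr)-(y-x) \;=\; 2p_z(k)-x-y.
\]
Hence the additive performance equals $\max_{k=1,\dots,m}\lvert 2p_z(k)-x-y\rvert$, which is in any case bounded above by $\max\{2\fmax-x-y,\;x+y-2\fmin\}$ because $\fmin\le p_z(k)\le\fmax$.

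For part~(i), I need to show that this upper bound is actually attained. The only subtlety is that the extreme values $\fmin$ or $\fmax$ of $p_z$ over $\{0,1,\dots,m\}$ could a priori be realised only at $k=0$, which is excluded from the $\max$. I would handle this by noting that $k=m$ \emph{is} included and yields the term $\lvert 2y-x-y\rvert=\lvert y-x\rvert$; if, say, $\fmin=x<\min_{k\ge1}p_z(k)$, then $x+y-2\fmin=y-x$, so the value at $k=m$ already realises this term in the maximum, and the symmetric argument covers $\fmax=x$. This establishes equality in~\eqref{eq:pattern-performance}.

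Parts~(ii) and~(iii) then fall out from~(i) by direct algebraic manipulation. For~(ii), I would apply $\max\{A,B\}\ge\tfrac12(A+B)$ with $A=2\fmax-x-y$ and $B=x+y-2\fmin$ to obtain additive performance $\ge\fmax-\fmin$, with equality precisely when $A=B$, i.e.\ $x+y=\fmin+\fmax$. For~(iii), the inequality ``additive performance $\le\fmax-\fmin+\eps$'' is equivalent to the conjunction of $2\fmax-x-y\le\fmax-\fmin+\eps$ and $x+y-2\fmin\le\fmax-\fmin+\eps$, which rearrange to $y\ge\yopt-\eps$ and $y\le\yopt+\eps$, respectively, where $\yopt=\fmin+\fmax-x$; combining these with the trivial $y\in[\fmin,\fmax]$ gives the claimed characterisation.

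The only real obstacle I anticipate is the boundary case analysis in~(i) sketched above, ensuring that restricting $k$ to $\{1,\dots,m\}$ does not lose any of the two ``signs'' of the maximum. Once that is dispatched, the content of the observation is essentially the algebraic identity displayed in the first paragraph together with straightforward rearrangements.
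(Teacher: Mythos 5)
Your proof is correct and follows essentially the same route as the paper, which implicitly rewrites the $k$-th term as $2p_z(k)-x-y$ and asserts the maximum is attained where $p_z$ is extremal, then treats parts (ii) and (iii) as immediate algebraic consequences. Your explicit handling of the boundary case where the extremum of $p_z$ over $\{0,\dots,m\}$ is realised only at $k=0$ (which is excluded from the maximum) is welcome extra care that the paper's one-line proof glosses over.
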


\begin{proof}
Part~\eqref{obs:performance:i} follows from the fact that the maximum on the left hand side of~\eqref{eq:pattern-performance} is obtained at an index~$k$ for which~$p_z(k)$ is either maximal or minimal. The remaining statements follow immediately. 
\end{proof}

For any~$a'\in\BR$ and~$x\in[a',a'+\dmax]$, one can easily construct a pattern starting at~$x$ and living on a strip~$[a,b]\subseteq[a',a'+\dmax]$ of width at most~$\dmax$ by iteratively applying the following trivial observation.

\begin{observation}\label{obs:basic}
Let $d=u+v$ with $u,v\geq0$. If~$I$ is an interval of size at least~$d$ and~$x\in I$, then~$x+v\in I$ or~$x-u\in I$ (or both).	
\end{observation}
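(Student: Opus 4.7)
The plan is to treat this as a pure arithmetic statement about a closed interval $I=[\alpha,\beta]$ with $\beta-\alpha\geq d=u+v$ and a point $x\in[\alpha,\beta]$. I would prove the contrapositive in each direction: assume one of the two shifted points fails to lie in $I$, and deduce that the other must lie in $I$.

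First, I would observe that since $v\geq0$, the only way $x+v\notin I$ is $x+v>\beta$, and likewise the only way $x-u\notin I$ is $x-u<\alpha$. So it suffices to rule out the simultaneous occurrence $x+v>\beta$ \emph{and} $x-u<\alpha$. Adding these two strict inequalities yields $v-u>\beta-\alpha-2(x-\alpha)$ wait—more cleanly, from $x+v>\beta$ I get $v>\beta-x$, and from $x-u<\alpha$ I get $u>x-\alpha$. Summing gives $u+v>(\beta-x)+(x-\alpha)=\beta-\alpha$, contradicting $u+v=d\leq\beta-\alpha$.

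Concretely, the proof I would write is a one-case argument: suppose $x+v\notin I$. Since $x\in I$ and $v\geq 0$, this forces $x+v>\beta$, i.e.\ $v>\beta-x$. Combined with $u+v\leq\beta-\alpha$ this gives
\begin{equation*}
u \;\leq\; (\beta-\alpha)-v \;<\; (\beta-\alpha)-(\beta-x) \;=\; x-\alpha,
\end{equation*}
so $x-u>\alpha$; together with the trivial bound $x-u\leq x\leq \beta$, we conclude $x-u\in I$.

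There is no real obstacle here—the statement is essentially the observation that you cannot ``fall off'' both ends of an interval of length at least $u+v$ by shifting left by $u$ or right by $v$. The only care needed is to note that the interval is closed (so that the endpoint cases $x+v=\beta$ or $x-u=\alpha$ are permitted and the argument goes through with a strict inequality coming from the failure side).
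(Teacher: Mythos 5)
Your proof is correct. The paper states this as a ``trivial observation'' and gives no proof at all, so there is nothing to compare against; your argument (if $x+v$ overshoots the right endpoint then $v>\beta-x$, hence $u\leq(\beta-\alpha)-v<x-\alpha$, so $x-u$ stays inside) is exactly the standard one-line justification, and you are right to flag that closedness of the interval is what makes the boundary cases unproblematic.
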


Choosing for example $a'=0$ and the start point~$x=\frac12\dmax$, Schrijver et al.~\cite{SchrijverSeymourWinkler-SIDMA} construct an unsplittable solution with additive performance at most~$\frac32\dmax$. In our terminology, their performance bound follows from Observation~\ref{obs:performance}~\eqref{obs:performance:iv}.

We notice that the described approach for constructing a pattern living on a strip of width at most~$\dmax$ can also be applied in backwards direction. In this way, for any~$a'\in\BR$ and~$y\in[a',a'+\dmax]$, one gets a pattern ending at~$y$ and living on a strip~$[a,b]\subseteq[a',a'+\dmax]$.

\section{Dealing with demands of medium size}
\label{sec:medium}

It is not difficult to observe that the additive performance of~$\frac32\dmax$ can be slightly improved in case there is a demand value of `medium size'.

\begin{lemma}\label{lem:medium-demands}
Let~$\delta\geq0$. If there is a demand~$i$ of value $d_i\in[\delta\dmax,(1-\delta)\dmax]$, then there exists a pattern with additive performance at most~$\bigl(\frac32-\frac\delta2\bigr)\dmax$. 
\end{lemma}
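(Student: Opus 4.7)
The plan is to construct two candidate unsplittable solutions $p^{(1)}$ and $p^{(2)}$ that coincide on positions $1,\dots,i-1$ but differ at the medium demand~$i$, and to argue that at least one of them achieves additive performance at most $(\tfrac32-\tfrac\delta2)\dmax$. Concretely, I run the forward greedy construction of Observation~\ref{obs:basic} on the strip $[0,\dmax]$ with start $x=\tfrac12\dmax$, fixing $z^{(1)}_i=v_i$ in the first pattern and $z^{(2)}_i=-u_i$ in the second, so that immediately after step~$i$ the two patterns differ vertically by exactly $d_i$. For each remaining step $k>i$, I continue both patterns greedily with synchronized tie-breaking: whenever both options remain valid for both patterns I pick the same direction in both, preserving the vertical shift $p^{(1)}(k)-p^{(2)}(k)=d_i$.

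Under this invariant the endpoints satisfy $y^{(1)}-y^{(2)}=d_i$, and the rest is arithmetic. Both endpoints lie in $[0,\dmax]$, and a separation of exactly $d_i\in[\delta\dmax,(1-\delta)\dmax]$ forces at least one of them to lie in the subinterval $[\tfrac\delta2\dmax,(1-\tfrac\delta2)\dmax]$: if both were below $\tfrac\delta2\dmax$ (or both above $(1-\tfrac\delta2)\dmax$) the separation would be strictly less than $\delta\dmax$, and if one lay below while the other lay above, the separation would strictly exceed $(1-\delta)\dmax$. For such an endpoint $y$ we have $|y-\tfrac12\dmax|\le(\tfrac12-\tfrac\delta2)\dmax$, and Observation~\ref{obs:performance}~\eqref{obs:performance:iv} (applied with $a=0$, $b=\dmax$, $x=\tfrac12\dmax$, so $\yopt=\tfrac12\dmax$) yields additive performance at most $\dmax+(\tfrac12-\tfrac\delta2)\dmax=(\tfrac32-\tfrac\delta2)\dmax$.

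The main obstacle is keeping the invariant alive across boundary events. A short case analysis shows that the synchronization can fail at a step~$k>i$ precisely when $d_k+d_i>\dmax$: there $p^{(1)}$ is forced down (because $p^{(1)}(k-1)>\dmax-v_k$) while simultaneously $p^{(2)}$ is forced up (because $p^{(2)}(k-1)<u_k$), so the separation jumps from $d_i$ to $d_i-d_k$, possibly even changing sign. I would handle this either by refining the tie-breaking rule so that it delays or avoids such collisions (exploiting the inner slack $\dmax-d_i\ge\delta\dmax$ that is available on both sides of the strip), or by allowing the break and then arguing that the two patterns effectively \emph{swap}, after which the new separation $|d_k-d_i|$ can play the role of the medium demand in the same arithmetic argument, iterating over successive break events if needed. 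Controlling this chain of boundary events so as to guarantee that in the end one of $y^{(1)},y^{(2)}$ lies in the target subinterval is the technically delicate core of the proof.
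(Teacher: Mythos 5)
Your branching-and-synchronization approach is genuinely different from the paper's, but the synchronization step---which you correctly identify as the delicate core---has a gap that the suggested repairs do not close. A collision at a step $k>i$ is not a tie-breaking artifact: if $p^{(1)}(k-1)>\dmax-v_k$ while $p^{(2)}(k-1)<u_k$ then the two patterns are \emph{forced} in opposite directions, and no refinement of the choice rule in the remaining free steps can prevent this. The slack $\dmax-d_i\geq\delta\dmax$ only rules out collisions with demands $d_k\leq\delta\dmax$, so medium and large $d_k$ can still collide, and collisions cannot in general be avoided. After a collision the separation becomes $d_i-d_k$, whose absolute value can be anywhere in $[0,\dmax]$ (\eg, $d_k$ close to $d_i$ makes it vanish), so it does not furnish a new ``medium'' quantity to feed back into the same arithmetic, and the iterate-over-breaks plan does not terminate with a controlled endpoint.

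The paper removes the synchronization problem entirely with a single preparatory move missing from your proposal: \emph{cyclic reindexing}. Because the demands $1,\dots,m$ sit on a ring and the additive performance is a maximum over all edges, one may relabel so that the medium demand becomes $i=m$, the \emph{last} index; then there is nothing after the branching step to keep synchronized. The paper constructs one pattern on demands $1,\dots,m-1$ living on a sub-strip of $[0,\dmax]$ with a \emph{prescribed end point} $\tfrac12(\dmax+d_m)-v_m$ (using the backward construction noted at the end of Section~\ref{sec:preliminaries}), and appends the final step in both ways, producing two patterns sharing the same start $x\in[0,\dmax]$ and ending at $\tfrac12(\dmax\pm d_m)$; it then splits into the cases $x\leq\tfrac12\dmax$ and $x\geq\tfrac12\dmax$ and applies Observation~\ref{obs:performance}~\eqref{obs:performance:i}. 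Your end-game arithmetic (one of two endpoints in $[0,\dmax]$ separated by a medium amount lies in $[\tfrac\delta2\dmax,(1-\tfrac\delta2)\dmax]$, and Observation~\ref{obs:performance}~\eqref{obs:performance:iv} with $\yopt=\tfrac12\dmax$ closes it) is fine and essentially equivalent to the paper's conclusion; it is the construction delivering two usable endpoints that needs the reindexing idea.
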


\begin{proof}
By reindexing the nodes of the ring accordingly, we may assume~$i=m$ without loss of generality. Deleting demand~$m$ (\eg, setting its demand value to~$d_m:=0$) yields a smaller instance. For this smaller instance we consider a pattern living on a sub-strip of~$[0,\dmax]$, ending at~$\frac12(\dmax+d_m)-v_m\in[0,\dmax]$, and starting at some~$x\in[0,\dmax]$. This pattern can be extended to a pattern for the original instance including demand~$m$ in two different ways: the first (choosing~$z_m=v_m$) ending at~$\frac12(\dmax+d_m)$ and the second (choosing~$z_m=-u_m$) ending at~$\frac12(\dmax-d_m)$. By construction, both patterns start at the same point~$x\in[0,\dmax]$ and live on a sub-strip of~$[0,\dmax]$; see Figure~\ref{fig:medium} for an illustration.
\begin{figure}[tb]
\centering
\begin{tikzpicture}[scale=1.4]
\draw[very thick,->] (-0.2,0) node[label=left:$0$] {} -- (5+0.3,0); 
\draw[thick,gray] (0,1.5) -- +(5.1,0); 
\draw[thick] (-0.1,1.5) node[label=left:$\dmax$] {} -- +(0.2,0); 
\draw[thick,gray,dashed] (0,1.2*0.75) -- +(4,0);
\draw[thick] (-0.1,1.2*0.75) node[label=left:$\frac12(\dmax+d_m)-v_m$] {} -- +(0.2,0); 

\foreach \i/\j in {0/0,1/1,2/\cdots,3/m-2,4/m-1,5/m}{
\draw [thick] (\i,-0.1) -- +(0,0.2); 
\node at (\i,-0.4) {$\j$};
}
\draw[very thick] (0,0) -- +(0,1.5+0.2); 
\draw[very thick] (5,0) -- +(0,1.5+0.2); 
\draw[very thick,gray] (0,0.5*0.75) node[point,label=left:$\color{black}x$] {} \foreach \z in {1.45,-1.2,-0.7,1.15} {-- ++(1,0.75*\z) node [point] {}}; 
\draw [very thick] (5,1.5*0.75) node [point,label=right:$\tfrac12(\dmax+d_m)$] {} -- node [sloped,above] {$+v_m$} (4,1.2*0.75) node [point] {} -- node [sloped,below] {$-u_m$} (5,0.5*0.75) node [point,label=right:$\tfrac12(\dmax-d_m)$] {};
\end{tikzpicture}		
\caption{Illustration of the two patterns constructed in the proof of Lemma~\ref{lem:medium-demands}.}
\label{fig:medium}
\end{figure}
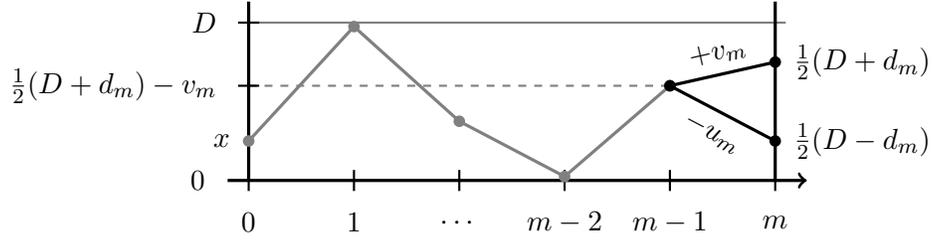

It follows from Observation~\ref{obs:performance}~\eqref{obs:performance:i} that the additive performance of the first pattern ending at~$\frac12(\dmax+d_m)$ is at most
\begin{align*}
\max\bigl\{2\dmax-x-\tfrac12(\dmax+d_m),x+\tfrac12(\dmax+d_m)\bigr\}\enspace.
\end{align*}
Since~$x\geq0$ and~$d_m\geq\delta\dmax$, the first of the two terms is at most~$\bigl(\frac32-\frac\delta2\bigr)\dmax$. Moreover, if~$x\leq\frac12\dmax$, then the same bound holds for the second term as~$d_m\leq(1-\delta)\dmax$. Finally, if~$x\geq\frac12\dmax$, the additive performance of the second pattern is at most~$\bigl(\frac32-\frac\delta2\bigr)\dmax$ by symmetric arguments.
\end{proof}


\section{Instances with small and big demands only}
\label{sec:small-big}

In this section we show how to obtain an improved additive performance for instances without medium size demand values. More precisely, for some~$0<\delta<\frac12$ to be determined later, we consider instances such that
\begin{align*}
d_i\in[0,\delta\dmax]\cup[(1-\delta)\dmax,\dmax]\qquad\text{for every demand~$i$.}
\end{align*}

The next definition will turn out to be useful in order to identify pairs of patterns that can easily be combined to a new pattern while keeping the width of the strip it is living on reasonably small.

\begin{definition}\label{def:closeness}
Let $\eps\geq0$. Two patterns~$p_{z'}$ and~$p_{z''}$ are said to be \emph{$\eps$-close} if $\lvert p_{z'}(k)-p_{z''}(k)\rvert\leq\eps$ for some~$k\in\{0,1,\dots,m\}$.
\end{definition}

The following lemma defines a useful crossover operation for two $\eps$-close patterns.

\begin{lemma}\label{lem:crossover}
Consider a fixed instance. Let~$p_{z'}$ be a pattern with start point~$x'$ living on strip~$[a',b']$, and $p_{z''}$ a pattern with end point~$y''$ living on strip~$[a'',b'']$. If the two patterns are $\eps$-close for some~$\eps>0$, then there is a pattern~$p_z$ living on a sub-strip of
\begin{align}\label{eq:lem:crossover}
\bigl[\min\{a',a''\}-\tfrac12\eps,\max\{b',b''\}+\tfrac12\eps\bigr]
\end{align}
with start point~$x$ and end point~$y$ such that~$x+y=x'+y''$.
\end{lemma}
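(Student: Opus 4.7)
The natural idea is a splice: pick an index $k \in \{0,1,\dots,m\}$ that witnesses $\eps$-closeness, take the first $k$ increments from $z'$ and the last $m-k$ increments from $z''$, and then shift the whole thing vertically to land the start point where we want. Concretely, I would define $z_i := z'_i$ for $i \leq k$ and $z_i := z''_i$ for $i > k$, and then choose the start point $x$ (which amounts to choosing a vertical shift) so that the two resulting requirements are met simultaneously.

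\textbf{Symmetric splitting of the gap.} Let $s := p_{z'}(k) - p_{z''}(k)$, so $|s|\leq\eps$. If I shift the first piece by $\alpha := x - x'$ and the second piece by $\beta := \alpha + s$ (the extra $s$ is needed to absorb the vertical jump at the splice and make $p_z$ continuous), then a direct computation gives $p_z(m) = x + s + (y'' - x')$, so $x + y = 2x - x' + s + y''$. The requirement $x + y = x' + y''$ therefore forces $x = x' - s/2$, hence $\alpha = -s/2$ and $\beta = +s/2$. This symmetric split is the key design choice: it is exactly what makes both the boundary condition $x+y = x' + y''$ hold \emph{and} the strip bound come out to $\eps/2$ on each side.

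\textbf{Bounding the strip.} With $\alpha = -s/2$ and $\beta = +s/2$, the first $k$ vertices of $p_z$ live on the translated strip $[a'-s/2,\,b'-s/2]$, and the last $m-k$ vertices live on a sub-strip of $[a''+s/2,\,b''+s/2]$. Since $|s/2|\leq\eps/2$, each of these is contained in $[\min\{a',a''\}-\tfrac12\eps,\,\max\{b',b''\}+\tfrac12\eps]$, and hence so is their union, which is where $p_z$ lives. Combined with the identity $x+y = x' + y''$ verified above, this is exactly the claim.

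\textbf{Where the difficulty sits.} The argument is essentially a bookkeeping exercise, and the only real design decision is how to distribute the vertical jump of size $s$ between the two halves. Splitting it asymmetrically (say, shifting only the second half by $s$) would still give a valid pattern, but then one side of the strip could grow by the full $\eps$ rather than $\eps/2$, and the boundary condition $x + y = x' + y''$ would fail. So the main thing to get right is to recognize that the \emph{symmetric} split $\pm s/2$ is the unique choice that simultaneously achieves both targets; once this is observed, the rest is immediate from $|s|\leq\eps$.
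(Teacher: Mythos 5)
Your proof is correct and takes essentially the same route as the paper: splice at an index $k$ witnessing $\eps$-closeness, keep the first $k$ increments from $z'$ and the last $m-k$ from $z''$, and shift the two halves by $\mp s/2$ (the paper calls your $s$ by the name $\eps'$). Your write-up is somewhat more explicit than the paper's one-line assertion of the shift, but the content is identical.
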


\begin{proof}
Let~$k$ be chosen according to Definition~\ref{def:closeness} and let~$\eps':=p_{z'}(k)-p_{z''}(k)$ such that~$\lvert\eps'\rvert\leq\eps$. Define solution~$z$ by~$z_i:=z'_i$ for~$i=1,\dots,k$ and~$z_i:=z_i''$ for $i=k+1,\dots,m$. The corresponding pattern~$p_z$ with start point~$x:=x'-\frac12\eps'$ has end point~$y=y''+\frac12\eps'$ and lives on a sub-strip of~\eqref{eq:lem:crossover}; see Figure~\ref{fig:crossover} for an illustration.
\end{proof}
\begin{figure}[tb]
\centering
\begin{tikzpicture}[xscale=1.2,yscale=1.4]

\importfading{%
\begin{scope}
\includegraphics{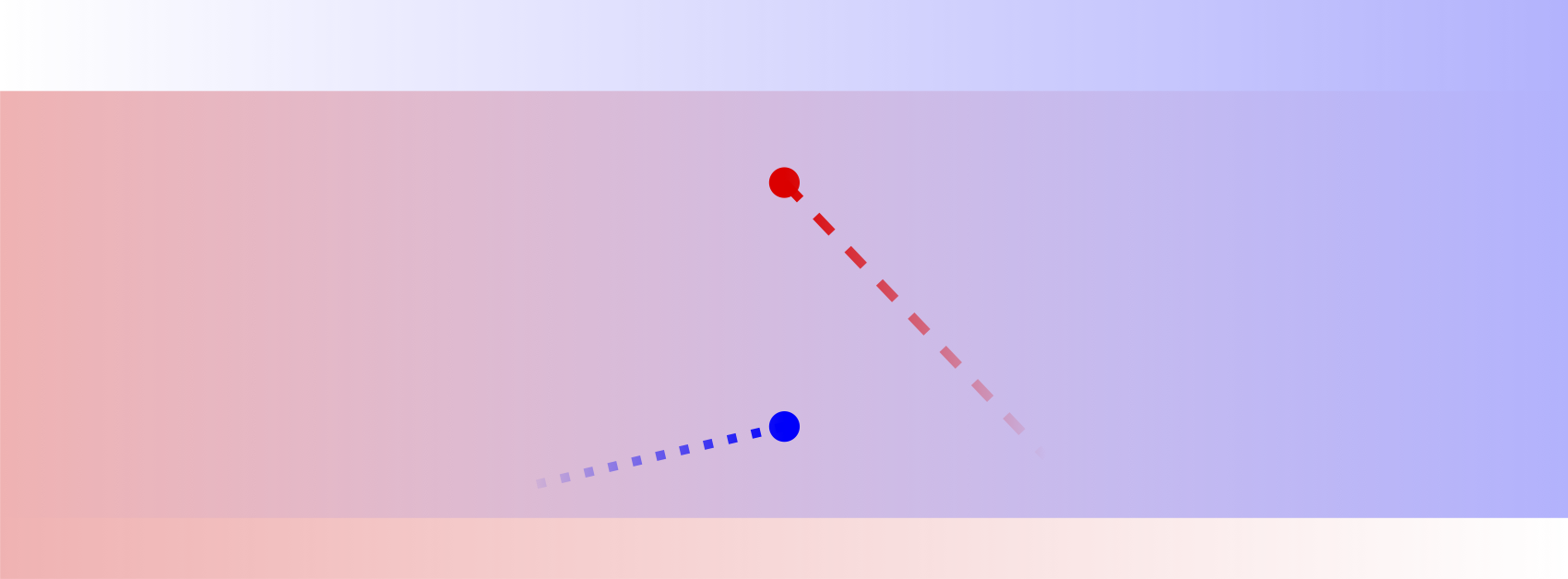}	
\end{scope}}{}
\begin{scope}[yshift=-0.2cm]
	
\importfading{}{%
\fill [fill=myred!60!white,path fading=east,fill opacity=0.5] (0,0.2) rectangle (6,1.8);
\fill [fill=myblue!60!white,path fading=west,fill opacity=0.5] (0,0.4) rectangle (6,2.1);}	
\draw[very thick,-] (-0.2,-0.4) -- +(6+0.4,0); 
\draw[thick] (-0.1,0.2) node[label=left:$a'$] {} -- +(0.2,0); 
\draw[thick] (-0.1,1.8) node[label=left:$b'$] {} -- +(0.2,0); 
\draw[thick] (5.9,0.4) -- +(0.2,0) node[label=right:$a''$] {}; 
\draw[thick] (5.9,2.1) -- +(0.2,0) node[label=right:$b''$] {}; 
\draw[thick,gray] (0,-0.2) -- +(6,0); 
\draw[thick] (-0.1,-0.2) node[label=left:$\min\{a'{,}a''\}-\frac12\eps'$] {} -- +(0.2,0); 
\draw[thick,gray] (0,2.5) -- +(6,0); 
\draw[thick] (-0.1,2.5) node[label=left:$\max\{b'{,}b''\}+\frac12\eps'$] {} -- +(0.2,0); 

\foreach \i/\j in {0/0,1/1,2/\dots,3/k,4/\dots,5/m-1,6/m} \draw [thick] (\i,-0.5) node[label=below:$\j$] {} -- +(0,0.2); 
\draw[very thick,-] (0,-0.5) -- +(0,3.1); 
\draw[very thick,-] (6,-0.5) -- +(0,3.1); 
\draw[very thick,myred,dashed] (0,0.9) node[point,label=left:$x'$] {} \foreach \z in {-0.7,1.4,-0.1} {-- ++(1,\z) node [point] {}}; 
\importfading{\node [point,myred] (rednode) at (3,1.5) {};}{\draw[very thick,myred,path fading=east,dashed] (3,1.5) node [point] (rednode) {} -- +(1,-0.9);}

\draw[very thick,myblue,dotted] (6,1.3) node[point,label=right:$y''$] {} \foreach \z in {+0.8,-0.6,-0.8} {-- ++(-1,\z) node [point] {}}; 
\importfading{\node [point,myblue] (bluenode) at (3,0.7) {};}{\draw[very thick,myblue,path fading=west,dotted] (3,0.7) node [point] (bluenode) {} -- +(-1,-0.2);}

\draw[very thick] (0,0.5) node[point,label=left:$x'-\frac12\eps'$] {} \foreach \z in {-0.7,1.4,-0.1} {-- ++(1,\z) node [point] {}}; 
\draw[very thick] (6,1.7) node[point,label=right:$y''+\frac12\eps'$] {} \foreach \z in {+0.8,-0.6,-0.8} {-- ++(-1,\z) node [point] {}}; 
\node [point] (blacknode) at (3,1.1) {};

\draw [<->] (blacknode) -- node[near end,left] {\footnotesize$\frac12\eps'$} (rednode);
\draw [<->] (blacknode) -- node[left] {\footnotesize$\frac12\eps'$} (bluenode);

\end{scope}
\end{tikzpicture}
\caption{Illustration of the crossover operation used in the proof of Lemma~\ref{lem:crossover}.}
\label{fig:crossover}
\end{figure}

As discussed in Section~\ref{sec:preliminaries}, it follows from Observation~\ref{obs:basic} that for a given start point~$x=p_z(0)\in[0,\dmax]$ a pattern~$z$ living on a sub-strip of~$[0,\dmax]$ can be constructed by iteratively choosing~$z_k\in\{v_k,-u_k\}$ such that
\[
p_z(k)=p_z(k-1)+z_k\in[0,\dmax],\qquad\text{for~$k=1,\dots,m$.}
\]
Notice that the choice of~$z_k$ in this procedure is not always unique since both~$p_z(k-1)+v_k$ and~$p_z(k-1)-u_k$ may lie in strip~$[0,\dmax]$. In these situations a natural choice is to set~$z_k$ such that~$p_z(k)$ is as close as possible to the midpoint~$\dmax/2$ of strip~$[0,\dmax]$, that is, as far as possible away from the boundaries~$0$ and~$\dmax$. A pattern constructed according to this refined procedure (breaking remaining ties arbitrarily) is called a \emph{forward greedy pattern}. For technical reasons we also assume that the start point of a forward greedy pattern is far enough away from the borders of the strip~$[0,\dmax]$, that is~$p_z(0)\in\bigl[\frac14\delta\dmax,(1-\frac14\delta)\dmax\bigr]$.

Similarly, for a given end point~$y=p_z(m)\in\bigl[\frac14\delta\dmax,(1-\frac14\delta)\dmax\bigr]$, a \emph{backward greedy pattern}~$z$ living on a sub-strip of $[0,\dmax]$ is constructed by iteratively choosing~$z_k\in\{v_k,-u_k\}$ and setting~$p_z(k-1)=p_z(k)-z_k$ such that~$|p_z(k-1)-\frac12\dmax|$ is minimal, for~$k=m,\dots,1$. We say that~$z$ is a \emph{greedy pattern} if it is a forward or backward greedy pattern according to the definitions above. 

\begin{lemma}\label{lem:help}
Consider two greedy patterns~$p_z$ and $p_{z'}$. If there is a $k\in\{1,\dots,m\}$ such that
\begin{enumerate}[(i)]
\item $d_k\geq(1-\delta)\dmax$ and $z_k\neq z'_k$,
\item $p_z(k-1)\geq p_{z'}(k-1)$ and $p_z(k)\geq p_{z'}(k)$,
\end{enumerate}
then~$p_z$ and $p_{z'}$ are~$\frac12\delta\dmax$-close.
\end{lemma}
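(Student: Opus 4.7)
I plan to split on the direction of the mismatch at step $k$ and on the greedy types of $p_z$ and $p_{z'}$. Writing $D(j):=p_z(j)-p_{z'}(j)$, $M:=\dmax/2+(u_k-v_k)/2$, and $M':=\dmax-M=\dmax/2+(v_k-u_k)/2$, condition~(ii) combined with $z_k\neq z'_k$ gives either $D(k-1)\geq 0$ and $D(k)\geq d_k$ (sub-case~(a): $z_k=v_k$, $z'_k=-u_k$) or $D(k-1)\geq d_k$ and $D(k)\geq 0$ (sub-case~(b): $z_k=-u_k$, $z'_k=v_k$). Since $p_z,p_{z'}\in[0,\dmax]$, the smaller of the two differences already lies in $[0,\dmax-d_k]\subseteq[0,\delta\dmax]$, giving only $\delta\dmax$-closeness; the real task is to save the remaining factor of two using the greedy rule.

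The key extract from greedy is: a forward greedy pattern making the choice $z_k=v_k$ satisfies $p(k-1)\leq M$, and making the choice $z_k=-u_k$ satisfies $p(k-1)\geq M$; a backward greedy pattern making the choice $z_k=v_k$ satisfies $p(k)\geq M'$, and making the choice $z_k=-u_k$ satisfies $p(k)\leq M'$. In each case this holds whether the choice is strictly greedy-preferred or forced by the alternative being outside $[0,\dmax]$, since $u_k\leq M$ and $v_k\leq M'$ place the ``forced'' regime on the correct side of the threshold.

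In the matching configuration --- both forward in sub-case~(a), or both backward in sub-case~(b) --- the greedy constraints applied to each pattern together with $D\geq 0$ squeeze both positions to equal $M$ (resp.\ $M'$), yielding $D(k-1)=0$ (resp.\ $D(k)=0$) and hence trivial closeness. In the cross-type configuration, for instance $p_z$ forward and $p_{z'}$ backward in sub-case~(a), the rules give $p_z(k)=p_z(k-1)+v_k\leq M+v_k=(\dmax+d_k)/2$ and $p_{z'}(k)\geq 0$; hence
\[
D(k)\leq\tfrac{\dmax+d_k}{2},\qquad D(k-1)=D(k)-d_k\leq\tfrac{\dmax-d_k}{2}\leq\tfrac12\delta\dmax,
\]
giving closeness at step $k-1$. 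The three other cross-type sub-cases are symmetric after exchanging $k-1\leftrightarrow k$ and forward $\leftrightarrow$ backward as appropriate, and the factor-of-two saving each time comes from the identity $M+v_k=M'+u_k=(\dmax+d_k)/2$.

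The main obstacle I anticipate is the non-matching same-type configuration (e.g.\ both forward in sub-case~(b), or, by the index-reversal symmetry that interchanges forward and backward greedy, both backward in sub-case~(a)). Here the greedy rules give $p_z(k-1)\geq M$ and $p_{z'}(k-1)\leq M$ but do not by themselves squeeze $D(k-1)$ or $D(k)$. One first observes that the ``both non-forced'' sub-sub-case is impossible, because it would force $D(k-1)\leq(\dmax-v_k)-u_k=\dmax-d_k<d_k$ for $\delta<\tfrac12$; the remaining sub-sub-cases, in which at least one pattern is forced, are then handled by combining the greedy bounds with the fact that a forced choice pins its pattern within $\tfrac12(\dmax-d_k)$ of a boundary of $[0,\dmax]$, so that $|D|$ at step $k$ is at most $\tfrac12(\dmax-d_k)\leq\tfrac12\delta\dmax$. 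Carrying out this last bookkeeping carefully, and checking the symmetry reduction, is the technical heart of the proof.
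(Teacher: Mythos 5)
Your analysis of the "easy" cases is correct and essentially matches the paper's Cases I and II. The thresholds $M=\dmax/2+(u_k-v_k)/2$ and $M'=\dmax-M$ you identify (valid for both preferred and forced greedy choices, since $u_k\leq M$ and $v_k\leq M'$) are exactly the inequalities $p_z(k)-d_k\leq\dmax-p_z(k)$ and $\dmax-p_{z'}(k)-d_k\leq p_{z'}(k)$ the paper uses, and the quantity $(\dmax+d_k)/2$ is the common ingredient. In particular, once either pattern is \emph{forward} in your sub-case~(a) (resp.\ \emph{backward} in sub-case~(b)), your argument works; your "matching" and "cross-type" cases are both subsumed there, and the decomposition you chose is a bit more refined than needed.

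However, your treatment of the hard case — the "non-matching same-type" configuration, say both forward in sub-case~(b) — is wrong. You correctly observe that both-non-forced is impossible (that gives $D(k-1)\leq\dmax-d_k<d_k$, contradiction). But the claim that in the remaining cases (at least one forced) $|D(k)|\leq\tfrac12(\dmax-d_k)$ is false: a forced choice pins the pattern within $v_k$ or $u_k$ (potentially up to $d_k$) of a boundary, not within $\tfrac12(\dmax-d_k)$. Concretely, take $\dmax=1$, $d_k=5/7$, $u_k=v_k=5/14$, $p_z(k-1)=1$, $p_{z'}(k-1)=0$: both choices are forced, $D(k-1)=1\geq d_k$ as required, yet $D(k)=2/7=\dmax-d_k$, which is twice your claimed bound and equals $\delta\dmax$ rather than $\tfrac12\delta\dmax$. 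There is \emph{no} purely local bound at positions $k-1$ and $k$ better than $\delta\dmax$ in this case. The paper's Case~III handles this by a \emph{global} argument: assuming $D(k-1)>\tfrac12\delta\dmax$, one gets $D(k)>(1-\tfrac12\delta)\dmax$; since $p_z(m),p_{z'}(m)\in[\tfrac14\delta\dmax,(1-\tfrac14\delta)\dmax]$ one has $D(m)\leq(1-\tfrac12\delta)\dmax$, so there is a smallest $q>k$ with $z_q=-u_q,\ z'_q=v_q$, and at that $q$ the backward greedy rule forces $p_z(q)\leq p_{z'}(q)$, giving $0\leq D(q-1) - d_q$... wait, giving $0\leq p_{z'}(q)-p_z(q)=d_q-D(q-1)<\tfrac12\delta\dmax$. (For your both-forward version the symmetric argument looks at a largest $q<k$ and uses the start points.) This global step is the crux of the lemma, and your sketch does not contain the idea; your proof as written is incomplete and the outlined repair would fail.
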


An illustration of a situation described in the lemma is given in Figure~\ref{fig:lemma_help}\,(a).
\begin{figure}[tb]
\centering
\begin{tikzpicture}[scale=1.5]
\draw [very thick,gray] (-0.25,1) node [black,left] {$\dmax$} -- +(2,0); 
\draw [very thick] (-0.25,-1) node [black,left] {$0$} -- +(2,0); 

\draw [thick] (0,-1.1) node[label=below:$k-1$] {} -- +(0,0.2);
\draw [thick] (1.5,-1.1) node[label=below:$k$] {} -- +(0,0.2);

\draw [very thick,myblue!40!white,dashed] (0,0.1) to +(1.5,-0.8) node [point] {};
\draw [very thick,myblue] (0,0.1) node [point] {} to node [above] {$p_{z}$} +(1.5,+0.5) node [point] {};
\draw [very thick,myred] (0,-0.1) node [point] {} to node [below] {$p_{z'}$} +(1.5,-0.8) node [point] {};

\draw [thick,myblue,decorate,decoration={brace,amplitude=5pt},xshift=0pt,yshift=0pt] (-0.1,0.1) -- +(0,0.5) node [black,midway,left,xshift=-3pt] {$v_k$};
\draw [thick,myred,decorate,decoration={brace,amplitude=5pt,mirror},xshift=0pt,yshift=0pt] (-0.1,-0.1) -- +(0,-0.8) node [black,midway,left,xshift=-3pt] {$u_k$};

\draw [thick,decorate,decoration={brace,amplitude=5pt},xshift=0pt,yshift=0pt] (1.6,0.57) -- +(0,-1.26) node [black,midway,right,xshift=3pt] {$d_k=v_k+u_k$};
\draw [thick,decorate,decoration={brace,amplitude=4pt},xshift=0pt,yshift=0pt] (1.6,0.97) -- +(0,-0.34) node [black,midway,right,xshift=3pt] {$\dmax-p_z(k)$};
\draw [thick,decorate,decoration={brace,amplitude=3pt},xshift=0pt,yshift=0pt] (1.6,-0.72) -- +(0,-0.26) node [black,midway,right,xshift=3pt] {$p_z(k)-d_k$};

\node at (1.2,-1.8) {(a)};

\begin{scope}[xshift=4.2cm]
\draw [very thick,gray] (-0.25,1) node [black,left] {$\dmax$} -- +(6,0); 
\draw [very thick] (-0.25,-1) node [black,left] {$0$} -- +(6,0); 

\draw [thick] (0,-1.1) node[label=below:$k-1$] {} -- +(0,0.2);
\draw [thick] (1.5,-1.1) node[label=below:$k$] {} -- +(0,0.2);
\draw [thick] (2.55,-1.1) node[label=below:$\cdots$] {} -- +(0,0.2);
\draw [thick] (3.6,-1.1) node[label=below:$q-1$] {} -- +(0,0.2);
\draw [thick] (5.1,-1.1) node[label=below:$q$] {} -- +(0,0.2);

\draw [very thick,myblue] (0,0.3) node [point] {} to node [above] {$p_{z}$} ++(1.5,+0.5) node [point] {} to ++(1.05,0.1) node [point] {} to ++(1.05,-0.1) node [point] {} to node [sloped,above,pos=0.4,yshift=-2pt] {$z_q=-u_q$} ++(1.5,-1.3) node [point] {};
\draw [very thick,myred] (0,-0.1) node [point] {} to node [below] {$p_{z'}$} ++(1.5,-0.8) node [point] {} to ++(1.05,0.1) node [point] {} to ++(1.05,-0.1) node [point] {} to node [sloped,above,pos=0.4,yshift=-2pt] {$z'_q=+v_q$} ++(1.5,0.7) node [point] {};

\draw [thick,decorate,decoration={brace,amplitude=4pt},xshift=0pt,yshift=0pt] (0.05,0.3) -- +(0,-0.4) node [black,midway,right,xshift=3pt] {$>\tfrac12\delta\dmax$};

\draw [thick,decorate,decoration={brace,amplitude=5pt},xshift=0pt,yshift=0pt] (1.55,0.78) -- +(0,-1.66) node [black,midway,right,xshift=3pt] {$>(1-\tfrac12\delta)\dmax$};

\draw [thick,decorate,decoration={brace,amplitude=5pt,mirror},xshift=0pt,yshift=0pt] (3.55,0.78) -- +(0,-1.66) node [black,midway,left,xshift=-3pt] {$<$};

\draw [thick,decorate,decoration={brace,amplitude=3pt},xshift=0pt,yshift=0pt] (5.15,-0.22) -- +(0,-0.26) node [black,midway,right,xshift=3pt] {$<\tfrac12\delta\dmax$};

\node at (2.7,-1.8) {(b)};
\end{scope}
\end{tikzpicture}
\caption{(a)~Two patterns~$p_z$ and~$p_{z'}$ featuring the properties stated in Lemma~\ref{lem:help}. Here we assume that~$z_k=v_k$ and~$z'_k=-u_k$. The other possibility ($z_k=-u_k$ and~$z'_k=v_k$) can be illustrated by mirroring both patterns left to right. (b)~Illustration of \emph{Case~III} in the proof of Lemma~\ref{lem:help}.}
\label{fig:lemma_help}
\end{figure}
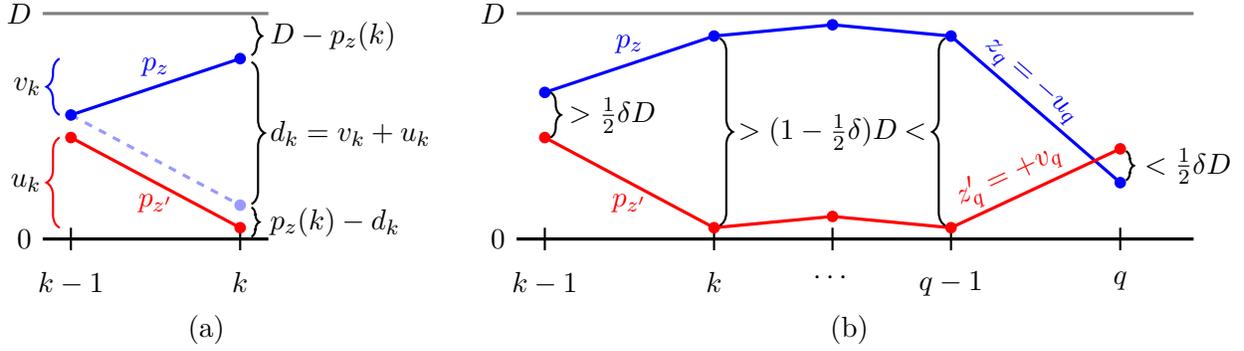
Even without the assumption that~$p_z$ and $p_{z'}$ are greedy patterns, it is not difficult to derive from Figure~\ref{fig:lemma_help}\,(a) a weaker version of this lemma that only guarantees $\delta\dmax$-closeness (remember that~$d_k=v_k+u_k\leq(1-\delta)\dmax$). In general, however, the stronger bound holds for the case of greedy patterns only.

\begin{proof}
We consider the scenario where~$z_k=v_k$ and thus~$z'_k=-u_k$ (see Figure~\ref{fig:lemma_help}\,(a)). The other scenario (\ie, $z_k=-u_k$ and~$z'_k=v_k$) can be transformed into our scenario by reading the patterns~$p_z$ and~$p_{z'}$ backwards from~$m$ to~$0$ (right to left). In particular, arguments symmetric to those given below can be used.

\emph{Case~I:} If $p_z$ is a \emph{forward} greedy pattern, then $p_z(k)-d_k\leq\dmax-p_z(k)$ (otherwise, during the construction of the forward greedy pattern~$p_z$, one would have chosen $z_k=-u_k$; see Figure~\ref{fig:lemma_help}\,(a)). This yields
\begin{multline*}
0
\leq p_z(k-1)-p_{z'}(k-1) 
= p_z(k)-v_k-\bigl(p_{z'}(k)+u_k\bigr)
\leq p_z(k)-(v_k+u_k)\\
= \tfrac12\bigl(p_z(k)-d_k\bigr)+\tfrac12\bigl(p_z(k)-d_k\bigr)
\leq\tfrac12\bigl(\dmax-p_z(k)\bigr)+\tfrac12\bigl(p_z(k)-d_k\bigr)
=\tfrac12(\dmax-d_k)
\leq\tfrac12\delta\dmax
\enspace.
\end{multline*}

\emph{Case~II:} Analogously, if~$p_{z'}$ is a \emph{forward} greedy pattern, then $\dmax-p_{z'}(k)-d_k\leq p_{z'}(k)$ (otherwise, during the construction of the forward greedy pattern~$p_{z'}$, one would have chosen $z'_k=v_k$). This yields
\begin{multline*}
0
\leq p_z(k-1)-p_{z'}(k-1) 
= p_z(k)-v_k-\bigl(p_{z'}(k)+u_k\bigr)
\leq\dmax-p_{z'}(k)-(v_k+u_k)\\
= \dmax-p_{z'}(k)-d_k
\leq\tfrac12 p_{z'}(k)+\tfrac12\bigl(\dmax-p_{z'}(k)-d_k\bigr)
=\tfrac12(\dmax-d_k)
\leq\tfrac12\delta\dmax
\enspace.
\end{multline*}

\emph{Case~III} (see Figure~\ref{fig:lemma_help}\,(b)): It remains to consider the case where both~$p_z$ and~$p_{z'}$ are \emph{backward} greedy patterns. We assume that $p_z(k-1)-p_{z'}(k-1)>\tfrac12\delta\dmax$; otherwise we are done. Then, 
\[
p_z(k)-p_{z'}(k)=d_k+p_z(k-1)-p_{z'}(k-1)>\bigl(1-\tfrac12\delta\bigr)\dmax\enspace.
\]
Since~$p_z(m),p_{z'}(m)\in\bigl[\frac14\delta\dmax,\bigl(1-\frac14\delta\bigr)\dmax\bigr]$ and thus~$p_z(m)-p_{z'}(m)\leq\bigl(1-\tfrac12\delta\bigr)\dmax$, there is a $q$ with~$k<q\leq m$ such that~$z_q=-u_q$ and~$z'_q=v_q$. Choosing the smallest such~$q$ we get
\[
p_z(q-1)-p_{z'}(q-1)\geq p_z(k)-p_{z'}(k)>\bigl(1-\tfrac12\delta\bigr)\dmax\enspace.
\]
As both~$p_z$ and $p_{z'}$ are backward greedy patterns, it must hold that~$p_z(q)\leq p_{z'}(q)$ since otherwise the choices $z_q=-u_q$ and~$z'_q=+v_q$ contradict the backward greedy paradigm. We can thus conclude that
\begin{multline*}
0\leq p_{z'}(q)-p_z(q)=p_{z'}(q-1)+v_q-\bigl(p_z(q-1)-u_q\bigr)\\
=d_q-\bigl(p_z(q-1)-p_{z'}(q-1)\bigr)<\dmax-\bigl(1-\tfrac12\delta\bigr)\dmax=\tfrac12\delta\dmax\enspace.
\end{multline*}
This concludes the proof.
\end{proof}

The next lemma identifies general situations where $\frac12\delta\dmax$-close patterns are guaranteed to exist.

\begin{lemma}\label{lem:cyclic-permutation}
Consider three greedy patterns~$p_{z^a},p_{z^b},p_{z^c}$, all three living on sub-strips of~$[0,\dmax]$. If the sorting of the patterns by their end points is not a cyclic permutation of the sorting by their start points, then (at least) two of the three patterns are $\frac12\delta\dmax$-close. 
\end{lemma}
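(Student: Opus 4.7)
The plan is to prove the contrapositive: assuming no two of the greedy patterns $p_{z^a},p_{z^b},p_{z^c}$ are $\tfrac12\delta\dmax$-close, I show that the ordering of the three patterns at step~$m$ is a cyclic rotation of their ordering at step~$0$. For any pair $i,j\in\{a,b,c\}$ I set $g_{ij}(k):=p_{z^i}(k)-p_{z^j}(k)$; this quantity is constant when $z^i_k=z^j_k$ and changes by $\pm d_k$ otherwise. The no-closeness hypothesis then reads $|g_{ij}(k)|>\tfrac12\delta\dmax$ for every step~$k$ and every pair.

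A first easy remark is that a pair of patterns can only cross at a big-demand step: if $g_{ij}$ changes sign between $k-1$ and $k$, then $|g_{ij}(k-1)|+|g_{ij}(k)|=d_k$, which together with no-closeness forces $d_k>\delta\dmax$ and hence $d_k\ge(1-\delta)\dmax$. Thus small-demand steps leave the three-pattern ordering untouched. Next, consider a big-demand step~$k$ and any pair $(i,j)$ with $z^i_k\ne z^j_k$. If the pair does \emph{not} cross at step~$k$, then one of the two patterns remains on the same side of the other at both $k-1$ and $k$, and this is exactly the hypothesis of Lemma~\ref{lem:help}; applying that lemma would yield $\tfrac12\delta\dmax$-closeness of the pair, contradicting the no-closeness assumption. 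Hence under no-closeness, at any big-demand step every pair with differing $z_k$ must cross.

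To conclude, note that $z_k$ takes only the two values $v_k$ and $-u_k$, so at any step either all three patterns share a common $z_k$ (and the ordering is preserved trivially) or exactly one of them --- the \emph{odd one} --- has $z_k$ differing from the common value of the other two. In the latter case (with a big demand), the odd pattern differs from both others in $z_k$ and therefore must cross both of them; since it moves opposite to both, it can only cross each from the same side, and so it jumps from the smallest of the three positions to the largest (or vice versa), while the other two patterns, moving in lockstep, preserve their relative order. This is precisely a cyclic rotation of the three-pattern ordering. Combining these observations, every step either preserves or cyclically rotates the ordering, so the end ordering lies in the same cyclic class as the start ordering, which is the desired contrapositive. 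The delicate point is the application of Lemma~\ref{lem:help} to rule out non-crossings at a big-demand step with differing $z_k$; once that is in place, the observation that two patterns sharing a common $z_k$ preserve their relative order makes the cyclic-shift conclusion immediate.
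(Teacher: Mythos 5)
Your proof is correct and is essentially the paper's argument run in the contrapositive direction: both rely on the subgroup structure of cyclic permutations in~$S_3$, on the identity $\lvert g_{ij}(k-1)\rvert+\lvert g_{ij}(k)\rvert=d_k$ at a crossing to handle small demands, and on Lemma~\ref{lem:help} to handle big demands. The paper locates a single non-cyclic step and exhibits a close pair there, while you show under the no-closeness hypothesis that every step is identity or a cyclic rotation; the contrapositive framing lets you invoke Lemma~\ref{lem:help} uniformly (any non-crossing pair with differing~$z_k$ at a big step is close) and thus bypasses the paper's four sub-cases~2ai--2bii, which is a modest but genuine streamlining of the same idea.
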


\begin{proof}
Since the cyclic permutations form a subgroup of the symmetric group~$S_3$, there exists an index~$k\in\{1,\dots,m\}$ such that the sorting of patterns by their values~$p_{z^a}(k),p_{z^b}(k),p_{z^c}(k)$ is not a cyclic permutation of the sorting by their values~$p_{z^a}(k-1),p_{z^b}(k-1),p_{z^c}(k-1)$. This means that exactly two of the three patterns, say~$p_{z^a}$ and~$p_{z^b}$, must have exchanged their positions in the sorting when going from~$k-1$ to~$k$ while the third pattern~$p_{z^c}$ has kept its position in the sorting (here we use the fact that pairwise exchanges are the only non-cyclic permutations in the symmetric group~$S_3$). 

Since~$p_{z^a}$ and~$p_{z^b}$ have exchanged their position, $z^a_k$ and $z^b_k$ cannot be equal, that is, without loss of generality $z^a_k=v_k$ and~$z^b_k=-u_k$. Thus,~$p_{z^a}(k-1)\leq p_{z^b}(k-1)$ and~$p_{z^a}(k)\geq p_{z^b}(k)$. An illustration of the following case distinction is given in Figure~\ref{fig:cases}.

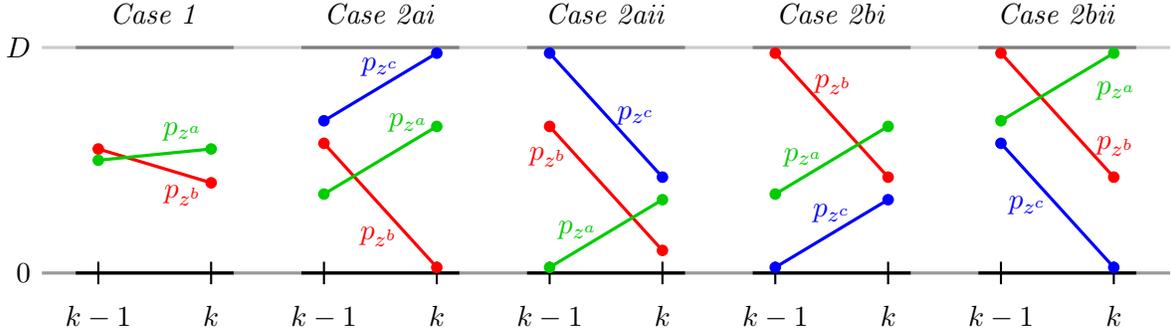
\begin{figure}[tb]
\centering
\begin{tikzpicture}[scale=1.5]
\draw [very thick,gray!40!white] (-0.5,1) node [black,left] {$\dmax$} -- +(10,0); 
\draw [very thick,gray!80!white] (-0.5,-1) node [black,left] {$0$} -- +(10,0); 

\foreach \i in {0,2,4,6,8}{
\draw [very thick] (\i-0.2,-1) -- +(1.4,0);
\draw [very thick,gray] (\i-0.2,1) -- +(1.4,0);
\draw [thick] (\i,-1.1) node[label=below:$k-1$] {} -- +(0,0.2);
\draw [thick] (\i+1,-1.1) node[label=below:$k$] {} -- +(0,0.2);
}

\begin{scope}[xshift=0cm]
\draw [very thick,myred] (0,0.1) node [point] {} to node [below,near end] {$p_{z^b}$} +(1,-0.3) node [point] {};
\draw [very thick,mygreen] (0,0) node [point] {} to node [above,near end] {$p_{z^a}$} +(1,0.1) node [point] {};
\node at (0.5,1.3) {\emph{Case~1}};
\end{scope}

\begin{scope}[xshift=2cm]
\draw [very thick,myred] (0,0.15) node [point] {} to node [left,near end] {$p_{z^b}$} +(1,-1.1) node [point] {};
\draw [very thick,mygreen] (0,-0.3) node [point] {} to node [above,near end] {$p_{z^a}$} +(1,0.6) node [point] {};
\draw [very thick,myblue] (0,0.35) node [point] {} to node [above] {$p_{z^c}$} +(1,0.6) node [point] {};	
\node at (0.5,1.3) {\emph{Case~2ai}};
\end{scope}

\begin{scope}[xshift=4cm]
\draw [very thick,myred] (0,0.3) node [point] {} to node [left,near start] {$p_{z^b}$} +(1,-1.1) node [point] {};
\draw [very thick,mygreen] (0,-0.95) node [point] {} to node [above,near start] {$p_{z^a}$} +(1,0.6) node [point] {};
\draw [very thick,myblue] (0,0.95) node [point] {} to node [right] {$p_{z^c}$} +(1,-1.1) node [point] {};	
\node at (0.5,1.3) {\emph{Case~2aii}};
\end{scope}

\begin{scope}[xshift=6cm]
\draw [very thick,myred] (0,0.95) node [point] {} to node [right,near start] {$p_{z^b}$} +(1,-1.1) node [point] {};
\draw [very thick,mygreen] (0,-0.3) node [point] {} to node [above,near start] {$p_{z^a}$} +(1,0.6) node [point] {};
\draw [very thick,myblue] (0,-0.95) node [point] {} to node [above] {$p_{z^c}$} +(1,0.6) node [point] {};	
\node at (0.5,1.3) {\emph{Case~2bi}};
\end{scope}

\begin{scope}[xshift=8cm]
\draw [very thick,myred] (0,0.95) node [point] {} to node [right,near end] {$p_{z^b}$} +(1,-1.1) node [point] {};
\draw [very thick,mygreen] (0,0.35) node [point] {} to node [below right,near end] {$p_{z^a}$} +(1,0.6) node [point] {};
\draw [very thick,myblue] (0,0.15) node [point] {} to node [left] {$p_{z^c}$} +(1,-1.1) node [point] {};	
\node at (0.5,1.3) {\emph{Case~2bii}};
\end{scope}

\end{tikzpicture}	
\caption{Illustration of different cases discussed in the proof of Lemma~\ref{lem:cyclic-permutation}.}
\label{fig:cases}
\end{figure}

\emph{Case~1:} If~$d_k=u_k+v_k\leq\delta\dmax$, we show that~$p_{z^a}$ and~$p_{z^b}$ are~$\frac12\delta\dmax$-close:
\begin{multline*}
|p_{z^a}(k)-p_{z^b}(k)|+|p_{z^b}(k-1)-p_{z^a}(k-1)|\\=\bigl(p_{z^a}(k)-p_{z^a}(k-1)\bigr)-\bigl(p_{z^b}(k)-p_{z^b}(k-1)\bigr)=v_k+u_k=d_k\leq\delta\dmax\enspace.
\end{multline*}
This inequality implies that at position~$k-1$ or~$k$ the two patterns are at most~$\frac12\delta\dmax$ apart.

\emph{Case~2:} It remains to consider the case where~$d_k\geq(1-\delta)\dmax$. With the help of Lemma~\ref{lem:help} we argue that~$p_{z^c}$ is $\frac12\delta\dmax$-close to at least one of the other two patterns in this case. First notice that the unchanged position of pattern~$p_{z^c}$ in the sorting cannot be the second position between the other two patterns as~$z^c_k\in\{v_k,-u_k\}$ and thus its position relative to one of the other patterns does not change when going from~$k-1$ to~$k$. Therefore,~$p_{z^c}(k-1)$ and~$p_{z^c}(k)$ are either both maximal or both minimal. 

\emph{Case~2a:} We first consider the case that $p_{z^c}(k-1)$ and~$p_{z^c}(k)$ are both maximal. If~$z^c_k=v_k$ (\emph{Case~2ai} in Figure~\ref{fig:cases}), then~$p_{z^b}$ and~$p_{z^c}$ are~$\frac12\delta\dmax$-close by Lemma~\ref{lem:help}. If, on the other hand,~$z^c_k=-u_k$ (\emph{Case~2aii} in Figure~\ref{fig:cases}), then~$p_{z^a}$ and~$p_{z^c}$ are~$\frac12\delta\dmax$-close by Lemma~\ref{lem:help}. 

\emph{Case~2b:} The same arguments can be applied to the symmetric case where~$p_{z^c}(k-1)$ and~$p_{z^c}(k)$ are both minimal; see Cases~2bi and~2bii in Figure~\ref{fig:cases}.
\end{proof}

The next lemma is the key to proving the existence of unsplittable solutions with improved additive performance (Theorem~\ref{thm:7/5}). From now on we set~$\delta:=\frac27$, which can be seen to be optimal for our kind of analysis.

\begin{lemma}\label{lem:large-demands}
For an instance with~$d_i\in\bigl[0,\frac27\dmax\bigr]\cup\bigl[\frac57\dmax,\dmax\bigr]$ for every demand~$i$, there is a pattern 
with additive performance at most~$\frac{19}{14}\dmax$.
\end{lemma}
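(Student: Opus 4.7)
The plan is to produce a pattern with additive performance at most $\frac{19}{14}\dmax$ by combining forward and backward greedy patterns through the crossover of Lemma~2. I would construct a small collection of greedy patterns, each living on a substrip of $[0,\dmax]$, whose start points (for forward patterns) and end points (for backward patterns) are chosen at fixed values in $[\frac{1}{14}\dmax,\frac{13}{14}\dmax]$.

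Two numerical thresholds drive the argument. By Observation~3(i), a single greedy pattern already has additive performance at most $\frac{19}{14}\dmax$ provided its start point $x$ and end point $y$ satisfy $x+y\in[\frac{9}{14}\dmax,\frac{19}{14}\dmax]$, since its strip is contained in $[0,\dmax]$. After applying the crossover to a pair that is $\frac{1}{7}\dmax$-close, which is exactly $\frac{1}{2}\delta\dmax$ for the chosen $\delta=\frac{2}{7}$, the resulting pattern lives on a substrip of $[-\frac{1}{14}\dmax,\dmax+\frac{1}{14}\dmax]$ and its combined $x+y$ equals the start of the first input pattern plus the end of the second. Observation~3(i) then bounds its additive performance by $\frac{19}{14}\dmax$ whenever this $x+y$ lies in $[\frac{11}{14}\dmax,\frac{17}{14}\dmax]$.

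With these observations in place, I would first check whether any of the constructed greedy patterns already satisfies the single-pattern criterion; this handles the easy cases in which some forward pattern happens to end, or some backward pattern happens to start, near the centre of the strip. Otherwise, the end points of the forward patterns and the start points of the backward patterns are forced into small regions near~$0$ or near~$\dmax$. I would then apply Lemma~4 to an appropriate triple of the constructed patterns: the end-point ordering cannot be a cyclic permutation of the start-point ordering, precisely because the forced extreme positions rule this out, so Lemma~4 yields a $\frac{1}{7}\dmax$-close pair. The crossover of this pair, using the preselected start of a forward pattern and the preselected end of a backward pattern, produces a new pattern whose $x+y$ lies in $[\frac{11}{14}\dmax,\frac{17}{14}\dmax]$ by design, and the performance bound follows.

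The main obstacle is the case analysis. I have to ensure that \emph{every} close pair that Lemma~4 might flag leads to a crossover with $x+y$ in the target interval. The delicate situation is when Lemma~4 returns two forward (or two backward) greedy patterns as close, in which case the crossover sum involves an uncontrolled endpoint; the resolution has to be that in this scenario one of the two patterns already satisfies $x+y\in[\frac{9}{14}\dmax,\frac{19}{14}\dmax]$ on its own. Organising these cases so that the thresholds $\frac{9}{14}\dmax$, $\frac{11}{14}\dmax$, $\frac{17}{14}\dmax$, and $\frac{19}{14}\dmax$ mesh correctly is what ultimately fixes $\delta=\frac{2}{7}$ as the right choice for this analysis.
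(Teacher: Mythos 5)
Your architecture is essentially the paper's: three greedy patterns (forward ones with fixed start, a backward one with fixed end) living on substrips of $[0,\dmax]$, the single-pattern window $x+y\in[\frac{9}{14}\dmax,\frac{19}{14}\dmax]$ for direct success, the crossover window $x+y\in[\frac{11}{14}\dmax,\frac{17}{14}\dmax]$ after a $\frac17\dmax$-close crossover, Lemma~\ref{lem:cyclic-permutation} applied to the triple, and $\delta=\frac27$ as the balance point. All of those numerical derivations check out.

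The gap is in your handling of what you correctly identify as the delicate case: when Lemma~\ref{lem:cyclic-permutation} flags two \emph{forward} greedy patterns as close (or, symmetrically, two backward ones). You propose that the resolution ``has to be that one of the two patterns already satisfies $x+y\in[\frac{9}{14}\dmax,\frac{19}{14}\dmax]$ on its own,'' so that this case would not actually arise. That is not what happens, and trying to prove it would stall. In the paper's construction, the two forward patterns start at $\frac{5}{14}\dmax$ and $\frac{11}{14}\dmax$; once one assumes the single-pattern criterion fails for each (otherwise done), their end points are forced into $[0,\frac{4}{14}\dmax]$ and $[\frac{8}{14}\dmax,\dmax]$ respectively, so their own $x+y$ values are \emph{strictly outside} $[\frac{9}{14}\dmax,\frac{19}{14}\dmax]$. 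The actual resolution is that the ``uncontrolled'' endpoint in the forward--forward crossover is not uncontrolled at all: it is exactly the end point that the failed single-pattern test already confined to a short interval. Crossing the fixed start $\frac{11}{14}\dmax$ of the one with the bounded end $y^a\in[0,\frac{4}{14}\dmax]$ of the other gives $x+y\in[\frac{11}{14}\dmax,\frac{15}{14}\dmax]$, squarely inside the crossover window. So the case is handled by a crossover between the two forward patterns, not by eliminating the case. To complete your proof you need to commit to concrete start/end values for the triple (the paper uses $\frac{5}{14}\dmax$, $\frac{6}{14}\dmax$, $\frac{11}{14}\dmax$) and verify that the a priori bounds on each pattern's free endpoint, derived from the failed single-pattern test, feed every possible close pair's crossover sum into $[\frac{11}{14}\dmax,\frac{17}{14}\dmax]$; the two-forward case then goes through by that bookkeeping, not by being vacuous.
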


\begin{proof}
We start by considering a forward greedy pattern~$p_{z^a}$ with start point~$x^a:=\frac5{14}\dmax$, living on a sub-strip of~$[0,\dmax]$. If its end point~$y^a$ is at least~$\frac27\dmax$, then its additive performance is at most~$\frac{19}{14}\dmax$ by Observation~\ref{obs:performance}~\eqref{obs:performance:iv} and we are done. We therefore assume that~$y^a\in\bigl[0,\frac27\dmax\bigr]$.

Our second pattern~$p_{z^b}$ is a backward greedy pattern that ends at $y^b:=\frac37\dmax$. Similar to above, we can assume that its start point~$x^b\in\bigl[0,\frac3{14}\dmax\bigr]$. If the two greedy patterns introduced so far are~$\frac17\dmax$-close, then, by Lemma~\ref{lem:crossover}, there is a pattern living on a sub-strip of~$\bigl[-\frac1{14}\dmax,\frac{15}{14}\dmax\bigr]$ with start point~$x$ and end point~$y$ such that~$x+y=x^a+y^b=\frac{11}{14}\dmax$. By Observation~\ref{obs:performance}~\eqref{obs:performance:i}, the additive performance of this pattern is at most
\begin{align*}
\max\bigl\{2\cdot\tfrac{15}{14}\dmax-\tfrac{11}{14}\dmax,\tfrac{11}{14}\dmax+2\cdot\tfrac1{14}\dmax\bigr\}=\tfrac{19}{14}\dmax
\end{align*}
and we are done. We therefore assume in the following that~$p_{z^a}$ and~$p_{z^b}$ are not~$\frac17\dmax$-close.

We now consider another forward greedy pattern~$p_{z^c}$ with start point~$x^c:=\frac{11}{14}\dmax$. If its end point~$y^c$ is at most~$\frac47\dmax$, then by Observation~\ref{obs:performance}~\eqref{obs:performance:iv} its additive performance is at most~$\frac{19}{14}\dmax$ and we are done. We can therefore assume that~$y^c\in\bigl[\frac47\dmax,\dmax\bigr]$.

Notice that the sorting of the three patterns by their end points (\ie, $a,b,c$) is not a cyclic permutation of the sorting by their start points (\ie, $b,a,c$). Therefore, by Lemma~\ref{lem:cyclic-permutation} with~$\delta:=\frac27$, at least two of the three patterns are $\frac17\dmax$-close. By our assumption above,~$p_{z^a}$ and~$p_{z^b}$ are not~$\frac17\dmax$-close. It therefore remains to consider two cases.

If~$p_{z^c}$ and~$p_{z^a}$ are~$\frac17\dmax$-close, then, by Lemma~\ref{lem:crossover}, there is a pattern living on a sub-strip of~$\bigl[-\frac1{14}\dmax,\frac{15}{14}\dmax\bigr]$ with start point~$x$ and end point~$y$ such that~$x+y=x^c+y^a\in\bigl[\frac{11}{14}\dmax,\frac{15}{14}\dmax\bigr]$. By Observation~\ref{obs:performance}~\eqref{obs:performance:i}, the additive performance of this pattern is at most
\begin{align*}
\max\bigl\{2\cdot\tfrac{15}{14}\dmax-\tfrac{11}{14}\dmax,\tfrac{15}{14}\dmax+2\cdot\tfrac1{14}\dmax\bigr\}=\tfrac{19}{14}\dmax
\end{align*}
and we are done. Finally, if~$p_{z^c}$ and~$p_{z^b}$ are~$\frac17\dmax$-close, then, by Lemma~\ref{lem:crossover}, there is a pattern living on a sub-strip of~$\bigl[-\frac1{14}\dmax,\frac{15}{14}\dmax\bigr]$ with start point~$x$ and end point~$y$ such that~$x+y=x^c+y^b=\frac{17}{14}\dmax$. By Observation~\ref{obs:performance}~\eqref{obs:performance:i}, the additive performance of this pattern is at most
\begin{align*}
\max\bigl\{2\cdot\tfrac{15}{14}\dmax-\tfrac{17}{14}\dmax,\tfrac{17}{14}\dmax+2\cdot\tfrac1{14}\dmax\bigr\}=\tfrac{19}{14}\dmax\enspace.
\end{align*}
This concludes the proof.
\end{proof}

We can finally prove our main result.

\begin{proof}[Proof of Theorem~\ref{thm:7/5}]
If the considered instance has a demand~$i$ of value~$d_i\in\bigl[\frac27\dmax,\frac57\dmax\bigr]$, the result follows by Lemma~\ref{lem:medium-demands} with~$\delta:=\frac27\dmax$. Otherwise, the result follows from Lemma~\ref{lem:large-demands}.
\end{proof}

Notice that the algorithm that is implicitly given in the proof of Theorem~\ref{thm:7/5} can be implemented to run in linear time.

\section{Deriving an improved lower bound}
\label{sec:lower-bound}

In this section we present a counterexample to Schrijver et al.'s conjecture. While this counterexample turns out to be relatively simple and easy to verify, we would like to emphasize that finding it was a cumbersome undertaking for both the author and his computer. Schrijver et al.\ write in their paper~\cite{SchrijverSeymourWinkler-SIDMA} ``We have never managed to produce a random example with $L>L^*+D$\dots'', and we can certainly confirm this experience. Among hundreds of millions of similar instances of the same size, all featuring a very special, promising structure, the one presented below turned out to be the only counterexample. 

In the following we first give some more details on how we tracked down this shy beast. In fact, our primary objective was to find a split routing that cannot be turned into an unsplittable routing without increasing the load on some edge by~$(1+\eps)\dmax$ for~$\eps>0$ as large as possible. Only later we realized that our insights can also be used to disprove Schrijver et al.'s conjecture.

Schrijver, Seymour, and Winkler~\cite{SchrijverSeymourWinkler-SIDMA} exhibit an instance of the \RLP{} on a ring of $n=26$~nodes and~$m=13$ nonzero demands together with a split routing that cannot be turned into an unsplittable routing without increasing the load on some edge by at least~$\frac{101}{100}\dmax$. An illustration of a slightly simplified version of this instance with~$n=24$ nodes and~$m=12$ demands can be found in Figure~\ref{fig:schrijver-instance}.
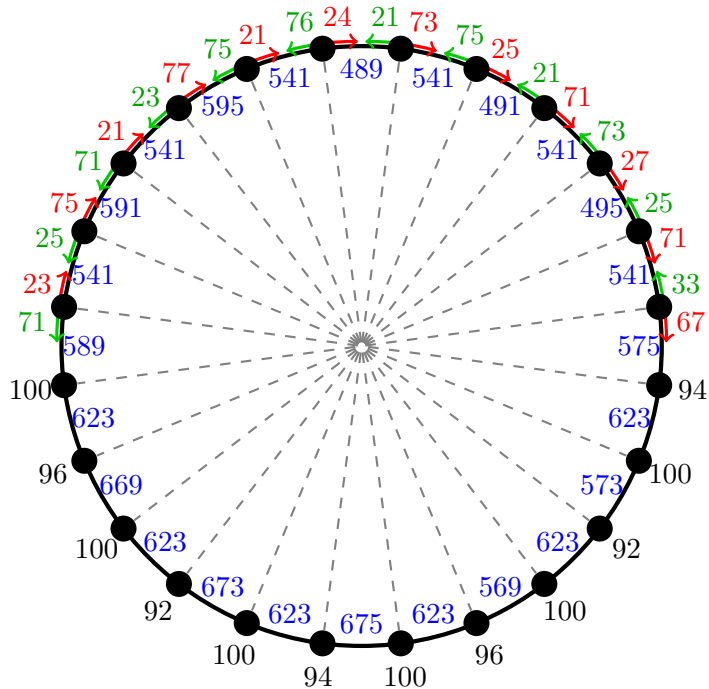
\begin{figure}[tb]
\centering
\begin{tikzpicture}
\def\radius{3.99}
\pgfmathparse{\radius+0.45}\let\blacklabelradius\pgfmathresult
\pgfmathparse{\radius+0.4}\let\graylabelradius\pgfmathresult
\pgfmathparse{\radius+0.07}\let\grayarcradius\pgfmathresult
\pgfmathparse{\radius-0.3}\let\flowradius\pgfmathresult
\def\grayangle{9}
	
\drawring{71/23,25/75,71/21,23/77,75/21,76/24,21/73,75/25,21/71,73/27,25/71,33/67}{}
\end{tikzpicture}
\caption{A slightly simplified version of the Schrijver et al.\ instance with~$\dmax=100$. The given split routing cannot be turned into an unsplittable routing without increasing the load on some edge by at least~$101$.}
\label{fig:schrijver-instance}
\end{figure}	
The maximum demand value is~$\dmax=100$, and it is a nontrivial task to verify that in any unsplittable routing there is an edge whose load exceeds the load of the split routing given in Figure~\ref{fig:schrijver-instance} by at least~$101$. With the help of a computer and brute force, however, one can either check all~$2^{12}$ unsplittable routings or, somewhat less brute, use dynamic programming to find an unsplittable routing with minimum possible increase of any edge load which can be done in time polynomial in~$n$ and~$\dmax$.\footnote{The idea is to enumerate over all integral patterns living on a sub-strip of~$[0,\dmax]$ using dynamic programming.}

In order to increase this lower bound of~$\frac{101}{100}$, it is crucial to gain some understanding of the idea behind Schrijver et al.'s example: If there was an unsplittable routing~$z$ that increases the load on every edge by at most~$\dmax=100$, it must have a corresponding pattern~$p_z$ that lives on a sub-strip of~$[0,100]$, has some start point~$x$ and `symmetric' end point~$100-x$. Schrijver et al.'s instance, however, is constructed such that all values~$u_i$ and~$v_i$ are positive integers, $u_i+v_i$ is even, and the difference of the start and the end point of any pattern is odd, as~$\sum_{i=1}^{12}u_i$ is odd. Since for integral~$x$ the difference of~$x$ and~$100-x$ is always even, the pattern would thus have to use half-integral points only and, as a consequence, lives on a sub-strip of~$[0.5,99.5]$ whose width is at most~$99$. This restriction of the width of suitable patterns together with the symmetry requirement on start and end points is the key property of Schrijver et al.'s instance. In this context it is important to notice that every other demand value is equal to~$100$ which essentially forces patterns to grow wide.

Notice that the example on~$n=16$ nodes depicted in Figure~\ref{fig:first-example}, with~$m=8$ demands and~$\dmax=10$ features the same three properties as Schrijver et al.'s example: (i)~all values~$u_i$ and~$v_i$ are positive integers with $u_i+v_i$ even; (ii)~start and end point of any pattern differ by an odd number as~$\sum_{i=1}^8u_i$ is odd; (iii)~every other demand has value~$\dmax=10$. For~$m=8$ and~$\dmax=10$, there exist more than one billion 
such sequences~$(u_1,v_1),\dots,(u_8,v_8)$. Taking certain symmetries into account, this number can be reduced to several hundreds of millions. By brute force enumeration, we found exactly two split routings that cannot be turned into unsplittable routings without increasing the load on some edge by at least~$11=\dmax+1$: the one depicted in Figure~\ref{fig:first-example} and the one depicted in Figure~\ref{fig:11/10}.
\begin{figure}[t]
\centering
\begin{tikzpicture}[scale=1.0]
\def\radius{3.45}
\pgfmathparse{\radius+0.45}\let\blacklabelradius\pgfmathresult
\pgfmathparse{\radius+0.4}\let\graylabelradius\pgfmathresult
\pgfmathparse{\radius+0.07}\let\grayarcradius\pgfmathresult
\pgfmathparse{\radius-0.3}\let\flowradius\pgfmathresult
\def\grayangle{9}

\drawring{2/2, 3/7, 7/1, 3/7, 2/2, 4/6, 4/4, 6/4}{}	
\end{tikzpicture}
\caption{A split routing for $8$ demands between opposite nodes together with the resulting edge loads. Each of the~$256=2^8$ unsplittable routings increases the load on some edge by at least~$11=\dmax+1$.}
\label{fig:11/10}
\end{figure}
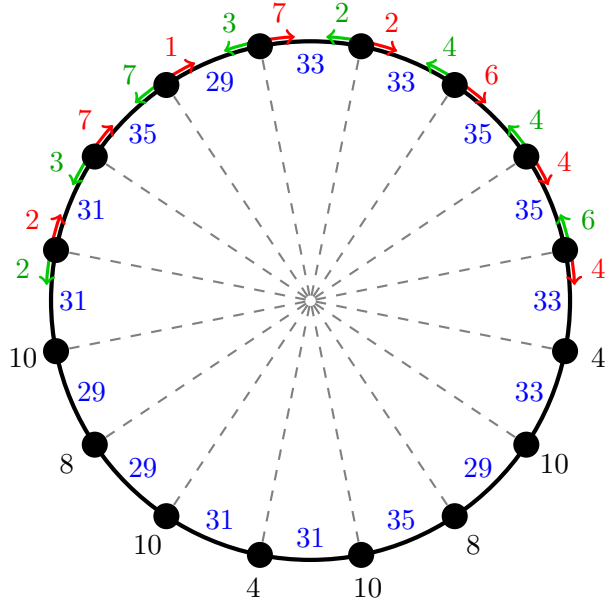

\begin{observation}\label{obs:11/10}
There exist instances with a given split routing such that in any unsplittable solution the load on some edge is increased by at least $\frac{11}{10}\dmax$ (cf.~Figures~\ref{fig:first-example} and~\ref{fig:11/10}).
\end{observation}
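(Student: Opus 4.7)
I would verify the claim directly for the two instances shown in Figures~\ref{fig:first-example} and~\ref{fig:11/10} by combining a parity argument with a finite case check.

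First, observe that both instances satisfy (i)~$u_i + v_i = d_i$ is even for every demand~$i$, and (ii)~$\sum_{i=1}^{8} u_i$ is odd ($35$ for Figure~\ref{fig:first-example} and $31$ for Figure~\ref{fig:11/10}). Fixing the start point $x := 0$, the pattern $p_z$ is integer-valued, and its endpoint $S := p_z(m) = \sum_{i=1}^m z_i$ is odd, since $u_i \equiv v_i \pmod{2}$ forces $z_i \equiv u_i \pmod{2}$ regardless of which sign is chosen. By Observation~\ref{obs:performance}~\eqref{obs:performance:i}, the additive performance equals $\max_k |2 p_z(k) - S|$, a maximum of odd nonnegative integers and therefore itself an odd integer. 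Consequently, \emph{performance} $\leq 10$ is equivalent to \emph{performance} $\leq 9$, which forces every $p_z(k)$ into the integer window $[(S-9)/2,(S+9)/2]$; since $p_z(0)=0$ lies in this window we additionally get $|S| \leq 9$, so $S$ is restricted to the ten odd integers in $\{-9,-7,\dots,9\}$.

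It then remains to rule out, for each of the two instances and for each of these ten candidate values of~$S$, the existence of a sign assignment $z \in \prod_{i=1}^8 \{v_i,-u_i\}$ with $\sum z_i = S$ whose prefix sums all stay inside the corresponding window. This is a finite verification over $2^8 = 256$ patterns that I would carry out via the dynamic programming sweep alluded to in the footnote: process $k = 1,\dots,8$ in order and maintain, for each admissible $S$, the set of reachable values of $p_z(k)$ that have stayed inside the window, checking finally that $S$ itself is not reachable. Brute enumeration of the 256 sign assignments is equally feasible, as is a manual sweep exploiting each instance's reflective symmetry to halve the workload.

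The main obstacle is that no short structural proof seems within reach: the two figures are essentially the only counterexamples of their size, uncovered only after scanning hundreds of millions of candidates sharing the same parity structure. The role of the parity reduction above is precisely to bring the search down to the narrow regime ``pattern width at most~$9$ with $|S| \leq 9$'', at which point the remaining enumeration becomes completely routine.
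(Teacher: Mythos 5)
Your proposal is correct and follows essentially the same route as the paper: the observation rests on a finite computational check of the $2^8$ sign assignments for each of the two instances, which the paper relegates to brute force or dynamic programming without giving a formal argument. Your parity reduction (that $\sum u_i$ odd and $u_i+v_i$ even force every prefix sum to a fixed parity, hence the performance is an odd integer and ``$\le 10$'' collapses to ``$\le 9$'') is the same structural idea the paper exposes when explaining the Schrijver--Seymour--Winkler construction and lists as one of the three properties deliberately preserved in the $n=16$ examples; you merely state it more explicitly as a reduction tool for the verification rather than as a heuristic for generating candidates. Both treatments ultimately appeal to a routine enumeration, so there is no substantive difference in method.
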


This observation does, however, not yet imply the existence of a counterexample to Schrijver et al.'s conjecture. Firstly, the split routings in Figures~\ref{fig:first-example} and~\ref{fig:11/10} are not optimum (notice that an optimum split routing for pairwise crossing demands splits every demand into two equal pieces). And, secondly, it is not true that any unsplittable routing increases the \emph{maximum} edge load by more than~$\dmax$. 

For the split routings in Figures~\ref{fig:first-example} and~\ref{fig:11/10} the first problem can be tackled by introducing for every edge an additional demand between its two end nodes. The demand values are chosen such that routing these demands unsplittably along their corresponding edges equalizes all edge loads; see Figure~\ref{fig:non-disproving-instance} for the necessary modification of the instance depicted in Figure~\ref{fig:first-example}. 
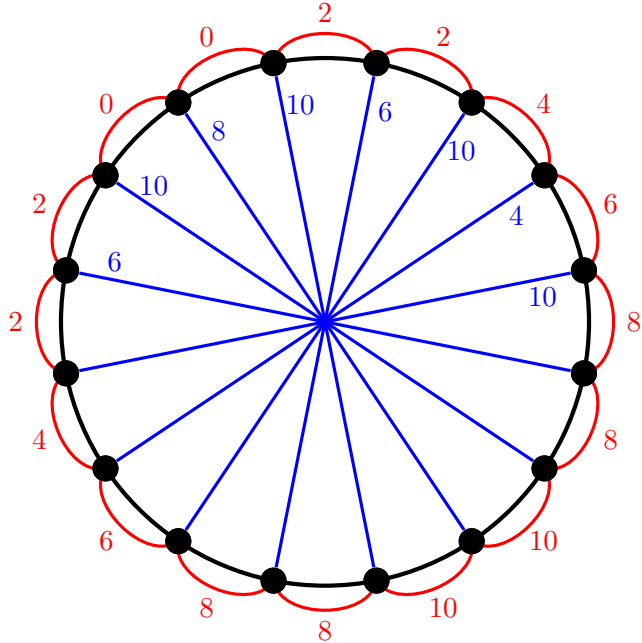
\begin{figure}[t]
\centering
\def\radius{3.51}
\pgfmathparse{\radius-0.6}\let\bluedemandradius\pgfmathresult
\pgfmathparse{\radius+0.6}\let\reddemandradius\pgfmathresult
\begin{tikzpicture}[xscale=-1]
\draw [ultra thick] (0,0) circle (\radius cm);
\foreach \i/\j/\k in {1/10/6,2/4/4,3/10/2,4/6/2,5/10/0,6/8/0,7/10/2,8/6/2}{
\draw [very thick,myred] (191.25-\i*22.5:\radius cm) to [bend left=80] (191.25-\i*22.5-22.5:\radius cm) node [node,black] {};
\node [myred] at (180-\i*22.5:\reddemandradius cm) {$\k$};
\draw [very thick,myred] (11.25-\i*22.5:\radius cm) to [bend left=80] (11.25-\i*22.5-22.5:\radius cm) node [node,black] {};
\pgfmathparse{int(10-\k)}\let\k\pgfmathresult
\node [myred] at (-\i*22.5:\reddemandradius cm) {$\k$};

\node [node] (n\i) at (191.25-\i*22.5:\radius cm) {};
\node [node] (nn\i) at (11.25-\i*22.5:\radius cm) {};
\draw [very thick,myblue] (n\i) -- (nn\i);
\node [myblue] at (196-\i*22.5:\bluedemandradius cm) {$\j$};
}
\end{tikzpicture}	
\caption{Extension of the instance depicted in Figure~\ref{fig:first-example}. By routing the additional `short' demands unsplittably along their corresponding edges and the remaining demands between opposite nodes as suggested in Figure~\ref{fig:first-example} we get an optimum split routing where all edges have load~$37$.}
\label{fig:non-disproving-instance}
\end{figure}
We argue that the split routing with uniform edge loads minimizes the maximum edge load. This follows from the fact that all demands are routed along shortest paths only such that the average edge load is minimum. Since the minimum and the maximum edge loads in Figure~\ref{fig:first-example} (and also in Figure~\ref{fig:11/10}) differ by no more than~$\dmax$, the additional demand values can be chosen to be at most~$\dmax$. (Notice that this is not true for the split routing depicted in Figure~\ref{fig:schrijver-instance}.) 

This careful modification of the considered instances almost solves the second problem as well. As a consequence of Observation~\ref{obs:11/10}, any unsplittable routing increases the maximum edge load by at least~$\frac{11}{10}\dmax$ as long as it sticks to routing the additional demands along their corresponding edges. The latter restriction, however, turns out to be crucial. The instance depicted in Figure~\ref{fig:non-disproving-instance} (based on Figure~\ref{fig:first-example}) has an unsplittable routing whose maximum edge load exceeds the maximum edge load of the optimum split routing by only~$\dmax=10$.

On the other hand, the extension of the instance depicted in Figure~\ref{fig:11/10} does lead to a counterexample, a slightly simplified version of which is depicted in Figure~\ref{fig:simpler-disproving-instance}. 
%
%
%
\begin{figure}[t]
\centering
\def\radius{3.44}
\pgfmathparse{\radius-0.6}\let\bluedemandradius\pgfmathresult
\pgfmathparse{\radius+0.7}\let\reddemandradius\pgfmathresult
\pgfmathparse{\radius+0.8}\let\newreddemandradius\pgfmathresult
\begin{tikzpicture}[xscale=-1,scale=0.95]
\draw [ultra thick] (0,0) circle (\radius cm);
\foreach \i/\j in {1/10,2/8,3/10,4/4,5/10,6/8,7/10,8/4}{
\node [node] (n\i) at (191.25-\i*22.5:\radius cm) {};
\node [node] (nn\i) at (11.25-\i*22.5:\radius cm) {};
\draw [very thick,myblue] (n\i) -- (nn\i);
\node [mybluelabel] at (196-\i*22.5:\bluedemandradius cm) {$\j$};
}
\foreach \i/\k in {5/10,6/4}{
\draw [very thick,myred] (191.25-\i*22.5:\radius cm) node [node,black] {} to [bend left=80] (191.25-\i*22.5-22.5:\radius cm) node [node,black] {};
\node [myred] at (180-\i*22.5:\reddemandradius cm) {$\k$};
\draw [very thick,myred] (11.25-\i*22.5:\radius cm) node [node,black] {} to [bend left=80] (11.25-\i*22.5-22.5:\radius cm) node [node,black] {};
\pgfmathparse{int(14-\k)}\let\k\pgfmathresult
\node [myred] at (-\i*22.5:\reddemandradius cm) {$\k$};
}
\foreach \i/\k in {1/4,3/6,7/8}{
\draw [very thick,myred] (191.25-\i*22.5:\radius cm) node [node,black] {} to [bend left=80] (191.25-\i*22.5-45:\radius cm) node [node,black] {};
\node [myred] at (180-\i*22.5-11.25:\newreddemandradius cm) {$\k$};
\draw [very thick,myred] (11.25-\i*22.5:\radius cm) node [node,black] {} to [bend left=80] (11.25-\i*22.5-45:\radius cm) node [node,black] {};
\pgfmathparse{int(14-\k)}\let\k\pgfmathresult
\node [myred] at (-\i*22.5-11.25:\newreddemandradius cm) {$\k$};
}
\end{tikzpicture}	
\caption{An instance of the ring loading problem on~$n=16$ nodes with~$18$ demands, $8$ of which belong to opposite nodes,~$4$ to neighboring nodes, and~$6$ to nodes of distance~$2$. The maximum demand value is~$\dmax=10$ and it can be checked that an optimum split routing has load~$L^*=39$ while an optimum unsplittable routing has load~$L=L^*+\dmax+1=50$.}
\label{fig:simpler-disproving-instance}
\end{figure}
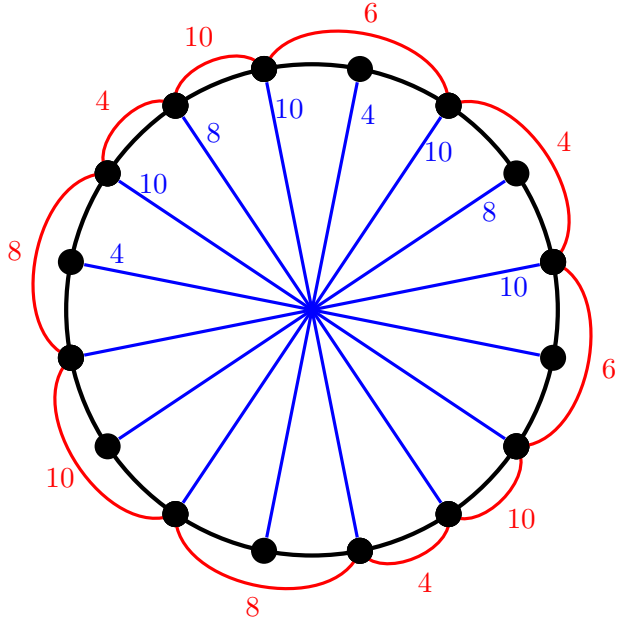
Similar to above, it is easy to see that combining the split routing in Figure~\ref{fig:11/10} with the straightforward `short' routing of the additional demands yields an optimum split routing where all edge loads are equal to~$L^*=39$.
%
Finally, by enumerating all~$2^{18}=262.144$ unsplittable routings, one can verify that the optimal unsplittable load is~$L=50$ such that~$L=L^*+\dmax+1$, thus disproving Schrijver et al.'s conjecture that~$L\leq L^*+\dmax$.

\section{Conclusion}
\label{sec:conclusion}

We have shown that the worst-case additive gap between optimum split and unsplittable routings for the \RLP{} is at most~$+\frac{19}{14}\dmax$ and at least~$+\frac{11}{10}\dmax$. That is, $L\leq L^*+\frac{19}{14}\dmax$ for all instances, and there exists an instance with~$L\geq L^*+\frac{11}{10}\dmax$. Our extensive yet unsuccessful search for instances yielding a larger lower bound gives us serious doubts as to whether there exist any.

\begin{conjecture}
$L\leq L^*+\frac{11}{10}\dmax$ for all instances of the \RLP{}.
\end{conjecture}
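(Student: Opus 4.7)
The plan is to close the gap from $\frac{19}{14}\dmax$ to $\frac{11}{10}\dmax$ by refining the two pillars of Theorem~\ref{thm:7/5}: Lemma~\ref{lem:medium-demands} and Lemma~\ref{lem:large-demands}. The current framework is parameterized by a single threshold $\delta$: Lemma~\ref{lem:medium-demands} yields $(\tfrac{3}{2}-\tfrac{\delta}{2})\dmax$ and Lemma~\ref{lem:large-demands} yields an expression decreasing in $\delta$, the two curves meeting at $\delta=\tfrac{2}{7}$. To reach $\tfrac{11}{10}\dmax$ through Lemma~\ref{lem:medium-demands} alone would already require $\delta\geq\tfrac{4}{5}$, leaving an empty medium band, so any successful proof must refine both lemmas simultaneously and likely introduce a finer classification of demand sizes.

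First I would strengthen Lemma~\ref{lem:medium-demands} to exploit \emph{several} medium-size demands simultaneously. The current proof creates only two candidate patterns that share a prefix and differ at the last step, placing the endpoint at $\tfrac12(\dmax\pm d_m)$. With $k$ medium demands one obtains a fan of up to $2^k$ candidate endpoints densely covering a sub-interval of $[0,\dmax]$, and one can select the one closest to the symmetric ideal identified in Observation~\ref{obs:performance}\eqref{obs:performance:iv}. A quantitative version should say: if the total demand inside a widened band $[\beta\dmax,(1-\beta)\dmax]$ is at least $\alpha\dmax$, then the additive performance drops to $(\tfrac{3}{2}-c\alpha)\dmax$ for some constant $c$ depending on $\beta$. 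After this refinement, we may assume that almost all demand mass is concentrated in a very small band of tiny and a very small band of huge sizes.

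Second I would strengthen Lemma~\ref{lem:large-demands} by considering a larger family of greedy patterns with start points spaced by $\tfrac1{10}\dmax$. The cyclic-permutation argument of Lemma~\ref{lem:cyclic-permutation} extends: among $k$ greedy patterns living on a sub-strip of $[0,\dmax]$, a pigeonhole-type analysis on the sortings of start and end points should guarantee a pair of patterns at distance at most $\tfrac1{10}\dmax$ rather than $\tfrac1{14}\dmax$. Combining this stronger closeness with the crossover operation of Lemma~\ref{lem:crossover} then produces a final pattern living on a strip of width at most $(1+\tfrac1{10})\dmax$ with start and end points symmetric about the strip's center, yielding additive performance exactly $\tfrac{11}{10}\dmax$ by Observation~\ref{obs:performance}\eqref{obs:performance:i}.

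The main obstacle, and very likely the reason the conjecture remains open, is that Lemma~\ref{lem:help} currently delivers only $\tfrac12\delta\dmax$-closeness, capping the achievable improvement above the strip width at $\tfrac12\delta\dmax$. Pushing the closeness below $\tfrac1{10}\dmax$ demands $\delta<\tfrac15$, but then Lemma~\ref{lem:medium-demands} yields only $(\tfrac{3}{2}-\tfrac1{10})\dmax=\tfrac{7}{5}\dmax$, so the refined medium-demand lemma sketched above must carry disproportionately more weight --- precisely where the combinatorial fan of endpoints is hardest to control without new ideas. Moreover, the extremal instance in Figure~\ref{fig:11/10} achieves $\tfrac{11}{10}\dmax$ essentially through an integrality/parity obstruction ($\sum_iu_i$ odd, $u_i+v_i$ even) that no purely continuous closeness argument can detect. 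Closing the gap therefore seems to require either a genuinely new invariant coupling the greedy construction to such parity features of the input, or a structural characterization showing that the extremal configurations are essentially unique and necessarily of the form exhibited in Section~\ref{sec:lower-bound}.
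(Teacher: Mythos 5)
The statement you are asked to prove is labeled a \emph{conjecture} in the paper's conclusion, not a theorem; the paper offers no proof of it, and the author explicitly states that he has ``serious doubts as to whether the algorithmic techniques and analytic tools discussed in this paper are powerful enough to close the remaining gap'' between the proven upper bound $\frac{19}{14}\dmax$ and the lower bound $\frac{11}{10}\dmax$. Your submission is accordingly a strategy sketch rather than a proof, and your own closing paragraph correctly diagnoses why the sketch, as stated, cannot succeed.

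To name the gap concretely: neither of the two refinements you propose is actually carried out. For the multi-demand strengthening of Lemma~\ref{lem:medium-demands}, you assert that a fan of $2^k$ candidate endpoints from $k$ medium demands ``densely covers'' an interval, but this requires a quantitative control on how the shifted sums $\frac12\dmax\pm d_{i_1}\pm\cdots\pm d_{i_k}$ distribute, which is not provided and is nontrivial when the medium demands are few or clustered. For the pigeonhole strengthening of Lemma~\ref{lem:cyclic-permutation}, the cyclic-permutation argument relies specifically on the structure of $S_3$ (pairwise transpositions are the only non-cyclic elements), and does not extend to $S_k$ for $k>3$ without a genuinely different combinatorial lemma; adding more greedy patterns does not automatically yield a closer pair. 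Most importantly, you correctly observe that the closeness guarantee $\frac12\delta\dmax$ from Lemma~\ref{lem:help} caps the method at roughly strip width plus $\frac12\delta\dmax$, while the medium-demand lemma requires $\delta$ large, a tension your proposal identifies but does not resolve. Finally, your remark that the extremal instances (\eg, Figure~\ref{fig:11/10}) arise from a parity obstruction invisible to any continuous closeness argument is exactly right and is the deepest reason the conjecture remains open: a proof would have to couple the greedy/crossover machinery with an arithmetic invariant that no step of the current framework sees. As it stands, this is a well-informed and accurate account of the difficulty, consistent with the paper's own assessment, but it is not a proof, and the conjecture remains open.
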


We have serious doubts as to whether the algorithmic techniques and analytic tools discussed in this paper are powerful enough to close the remaining gap. But we hope that, 16 years after the publication of Schrijver et al.'s landmark paper~\cite{SchrijverSeymourWinkler-SIDMA}, the presented progress will stimulate further research and new ideas on this fine and challenging problem.

In view of Shepherd's remark (cf.~\cite{Shepherd-survey2009}) on the `tempting similarity' of Schrijver et al.'s conjecture and results on single-source unsplittable flows~\cite{DGG99} and, in particular, Goemans' related conjecture~\cite{BellairsOpenProblems}, it is natural to ask for possible consequences of the insights presented in this paper. We do not have a meaningful answer to this question, nor do we currently see any reason to doubt the validity of Goemans' conjecture.

\paragraph{Acknowledgements.} The author would like to thank Bruce Shepherd for giving the impetus to this work as well as for numerous inspiring conversations and valuable comments.

\bibliographystyle{plain}
\bibliography{../BibTeX/mybib.bib}

\end{document}